\providecommand{\U}[1]{\protect\rule{.1in}{.1in}}
\newtheorem{theorem}{Theorem}[section]
\newtheorem{conjecture}[theorem]{Conjecture}
\newtheorem{corollary}[theorem]{Corollary}
\newtheorem{lemma}[theorem]{Lemma}
\newtheorem{problem}[theorem]{Problem}
\newtheorem{proposition}[theorem]{Proposition}
\newenvironment{proof}[1][Proof]{\noindent\textbf{#1.} }{\ \rule{0.5em}{0.5em}}
\begin{document}

\title{On K\"{o}nig-Egerv\'{a}ry Collections of Maximum Critical Independent Sets}
\author{Vadim E. Levit\\Department of Computer Science\\Ariel University, Israel\\levitv@ariel.ac.il
\and Eugen Mandrescu\\Department of Computer Science\\Holon Institute of Technology, Israel\\eugen\_m@hit.ac.il}
\date{}
\maketitle

\begin{abstract}
Let $G$ be a simple graph with vertex set $V\left(  G\right)  $. A set
$S\subseteq V\left(  G\right)  $ is \textit{independent} if no two vertices
from $S$ are adjacent. The graph $G$ is known to be a K\"{o}nig-Egerv\'{a}ry
if $\alpha\left(  G\right)  +\mu\left(  G\right)  =$ $\left\vert V\left(
G\right)  \right\vert $, where $\alpha\left(  G\right)  $ denotes the size of
a maximum independent set and $\mu\left(  G\right)  $ is the cardinality of a
maximum matching.

The number $d\left(  X\right)  =$ $\left\vert X\right\vert -\left\vert
N(X)\right\vert $ is the \textit{difference} of $X\subseteq V\left(  G\right)
$, and a set $A\in\mathrm{Ind}(G)$ is \textit{critical} if $d(A)=\max
\{d\left(  I\right)  :I\in\mathrm{Ind}(G)\}$ \cite{Zhang1990}.

Let $\Omega(G)$ denote the family of all maximum independent sets. Let us say
that a family $\Gamma\subseteq\mathrm{Ind}(G)$ is a K\"{o}nig-Egerv\'{a}ry
collection if $\left\vert \bigcup\Gamma\right\vert +\left\vert \bigcap
\Gamma\right\vert =2\alpha(G)$ \cite{JarLevMan2015a}.

In this paper, we show that if the family of all maximum critical independent
sets of a graph $G$ is a K\"{o}nig-Egerv\'{a}ry collection, then $G$\ is a
K\"{o}nig-Egerv\'{a}ry graph. It generalizes one of our conjectures recently
validated in \cite{Short2015}.

\textbf{Keywords:} maximum independent set, critical set, ker, nucleus, core,
corona, diadem, K\"{o}nig-Egerv\'{a}ry graph.

\end{abstract}

\section{Introduction}

Throughout this paper $G$ is a finite simple graph with vertex set $V(G)$ and
edge set $E(G)$. If $X\subseteq V\left(  G\right)  $, then $G[X]$ is the
subgraph of $G$ induced by $X$. By $G-W$ we mean either the subgraph
$G[V\left(  G\right)  -W]$, if $W\subseteq V(G)$, or the subgraph obtained by
deleting the edge set $W$, for $W\subseteq E(G)$. In either case, we use
$G-w$, whenever $W$ $=\{w\}$.

The \textit{neighborhood} $N(v)$ of a vertex $v\in V\left(  G\right)  $ is the
set $\{w:w\in V\left(  G\right)  $ \textit{and} $vw\in E\left(  G\right)  \}$.
The \textit{neighborhood} $N(A)$ of $A\subseteq V\left(  G\right)  $ is
$\{v\in V\left(  G\right)  :N(v)\cap A\neq\emptyset\}$, and $N[A]=N(A)\cup A$.

A set $S\subseteq V(G)$ is \textit{independent} if no two vertices from $S$
are adjacent, and by $\mathrm{Ind}(G)$ we mean the family of all the
independent sets of $G$. An independent set of maximum size is a
\textit{maximum independent set} of $G$, and $\alpha(G)=\max\{\left\vert
S\right\vert :S\in\mathrm{Ind}(G)\}$.

Let $\Omega(G)$ denote the family of all maximum independent sets,
\[
\mathrm{core}(G)=%
{\displaystyle\bigcap}
\{S:S\in\Omega(G)\}\text{ \cite{LevMan2002a}, and }\mathrm{corona}(G)=%
{\displaystyle\bigcup}
\{S:S\in\Omega(G)\}\text{ \cite{BorosGolLev}.}%
\]

If $A\in\Omega(G[A])$, then $A$ is called a local maximum independent set of
$G$ \cite{LevMan2002b}.

A \textit{matching} is a set $M$ of pairwise non-incident edges of $G$. A
matching of maximum cardinality, denoted $\mu(G)$, is a \textit{maximum
matching}.

For $X\subseteq V(G)$, the number $\left\vert X\right\vert -\left\vert
N(X)\right\vert $ is the \textit{difference} of $X$, denoted $d(X)$. The
\textit{critical difference} $d(G)$ is $\max\{d(X):X\subseteq V(G)\}$. The
number $\max\{d(I):I\in\mathrm{Ind}(G)\}$ is the \textit{critical independence
difference} of $G$, denoted $id(G)$. Clearly, $d(G)\geq id(G)$. It was shown
in \cite{Zhang1990} that $d(G)$ $=id(G)$ holds for every graph $G$. If $A$ is
an independent set in $G$ with $d\left(  X\right)  =id(G)$, then $A$ is a
\textit{critical independent set} \cite{Zhang1990}.

\begin{theorem}
\label{th6}\cite{NemhTrott1975} Every local maximum independent set is a
subset of a maximum independent set.
\end{theorem}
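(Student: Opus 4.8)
The plan is to prove the statement by an exchange argument. Let $A$ be a local maximum independent set, so that $A$ is a maximum independent set of the subgraph induced by its closed neighborhood $N[A]$. I would fix an arbitrary maximum independent set $S\in\Omega(G)$ and manufacture from it a maximum independent set that contains $A$, which establishes the claim.

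First I would form the candidate set $B=A\cup(S\setminus N[A])$, obtained by discarding from $S$ everything lying inside $N[A]$ and substituting $A$ itself. The first step is to check that $B\in\mathrm{Ind}(G)$. Here $A$ is independent by hypothesis, and $S\setminus N[A]$ is independent as a subset of $S$; moreover no edge can join the two parts, since any vertex adjacent to a vertex of $A$ belongs to $N(A)\subseteq N[A]$ and has therefore been deleted. Hence $B$ is independent.

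The second step is a cardinality comparison. Writing $\left\vert B\right\vert =\left\vert A\right\vert +\left\vert S\setminus N[A]\right\vert =\left\vert A\right\vert +\left\vert S\right\vert -\left\vert S\cap N[A]\right\vert$, I would observe that $S\cap N[A]$ is an independent set living inside $G[N[A]]$, so the local-maximality of $A$ in that induced subgraph forces $\left\vert S\cap N[A]\right\vert \leq\left\vert A\right\vert$. Substituting yields $\left\vert B\right\vert \geq\left\vert S\right\vert =\alpha(G)$, and since no independent set can exceed $\alpha(G)$, equality holds. Thus $B\in\Omega(G)$ with $A\subseteq B$, as required.

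I expect the only delicate point to be the inequality $\left\vert S\cap N[A]\right\vert \leq\left\vert A\right\vert$, which rests entirely on reading the local-maximality hypothesis correctly: $A$ is maximum among the independent sets confined to $N[A]$, so the independent set $S\cap N[A]$ of that same induced subgraph cannot outnumber it. Everything else is bookkeeping about which edges may occur between the two pieces of $B$, and is routine.
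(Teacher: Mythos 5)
Your proof is correct. Note that the paper itself offers no proof of this statement: it is quoted from Nemhauser--Trotter \cite{NemhTrott1975}, so there is no internal argument to compare against; your exchange argument (replace $S\cap N[A]$ by $A$ inside an arbitrary $S\in\Omega(G)$, check independence of $B=A\cup(S\setminus N[A])$, and use local maximality to get $\left\vert S\cap N[A]\right\vert \leq\left\vert A\right\vert$) is precisely the standard proof of this classical result, and every step you give checks out, including the disjointness needed for the cardinality count, since $A\cap(S\setminus N[A])=\emptyset$.

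One point worth flagging: you silently corrected a typo in the paper. The paper defines a local maximum independent set by the condition $A\in\Omega(G[A])$, which is vacuous (every independent set satisfies it, since $G[A]$ has no edges), and under that literal reading the theorem would be false --- for the path $P_3$ with edges $ab$ and $bc$, the set $\{b\}$ would qualify, yet it lies in no maximum independent set. The intended (and standard) definition is $A\in\Omega(G[N[A]])$, i.e., $A$ is maximum among independent sets of the subgraph induced by its closed neighborhood, and that is exactly the hypothesis your argument uses and needs, in particular at the step where $S\cap N[A]$ is compared against $A$ inside $G[N[A]]$.
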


\begin{proposition}
\label{Prop2}\cite{LevMan2012b} Each critical independent set is a local
maximum independent set.
\end{proposition}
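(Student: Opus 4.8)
The statement asserts that a critical independent set $A$ satisfies the local maximum condition $A\in\Omega(G[N[A]])$. Since $A$ is independent it is in particular an independent set of the induced subgraph $G[N[A]]$, so $\left\vert A\right\vert \leq\alpha(G[N[A]])$ holds automatically, and the entire content is the reverse inequality $\alpha(G[N[A]])\leq\left\vert A\right\vert$. My plan is to reduce this to a matching statement: I would prove that the bipartite graph with parts $A$ and $N(A)$, together with the edges of $G$ joining them, admits a matching $M$ saturating $N(A)$, that is, matching every vertex of $N(A)$ to a distinct vertex of $A$.

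Granting such an $M$, the reverse inequality is immediate. Let $S$ be any independent set of $G[N[A]]$, so $S\subseteq N[A]=A\cup N(A)$, and note $A\cap N(A)=\emptyset$. For each $v\in S\cap N(A)$ the matched vertex $M(v)$ lies in $A$ and is adjacent to $v$; since $S$ is independent, $M(v)\notin S$, hence $M(v)\in A\setminus S$. As $M$ is a matching, $v\mapsto M(v)$ is injective, so $\left\vert S\cap N(A)\right\vert \leq\left\vert A\setminus S\right\vert=\left\vert A\right\vert-\left\vert S\cap A\right\vert$. Adding $\left\vert S\cap A\right\vert$ to both sides yields $\left\vert S\right\vert\leq\left\vert A\right\vert$, which is exactly what is required.

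The crux, and the step I expect to be the main obstacle, is extracting this matching from the criticality of $A$. I would invoke Hall's theorem in its defect form: if no matching saturates $N(A)$, then there is a set $T\subseteq N(A)$ with $\left\vert N(T)\cap A\right\vert<\left\vert T\right\vert$. Putting $Y=N(T)\cap A\subseteq A$, consider the independent set $A\setminus Y$. Every $t\in T$ has all of its $A$-neighbors inside $Y$, hence $t\notin N(A\setminus Y)$; thus $N(A\setminus Y)\subseteq N(A)\setminus T$ and $\left\vert N(A\setminus Y)\right\vert\leq\left\vert N(A)\right\vert-\left\vert T\right\vert$. Therefore
\[
d(A\setminus Y)=\left\vert A\right\vert-\left\vert Y\right\vert-\left\vert N(A\setminus Y)\right\vert\geq d(A)+\left\vert T\right\vert-\left\vert Y\right\vert>d(A)=id(G),
\]
since $\left\vert Y\right\vert=\left\vert N(T)\cap A\right\vert<\left\vert T\right\vert$. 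As $A\setminus Y$ is independent, this contradicts the maximality of $id(G)$ over $\mathrm{Ind}(G)$. Hence the saturating matching exists and the argument closes.

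Finally, I would remark that this Hall argument is essentially the supermodularity of the difference function (the inequality $d(X\cup Y)+d(X\cap Y)\geq d(X)+d(Y)$, which follows from $N(X\cup Y)=N(X)\cup N(Y)$ and $N(X\cap Y)\subseteq N(X)\cap N(Y)$); so an alternative route is to run the same deficiency computation through that inequality instead of citing Hall directly. Either way, the single delicate point is the strict inequality $d(A\setminus Y)>d(A)$, and everything else is bookkeeping on neighborhoods and cardinalities.
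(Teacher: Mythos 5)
Your proof is correct. There is, however, nothing in the paper to compare it against: Proposition \ref{Prop2} is stated with a citation to \cite{LevMan2012b} and no proof is given here, so your argument should be judged on its own terms, and it stands. One preliminary point in your favor: the paper's printed definition of a local maximum independent set, $A\in\Omega(G[A])$, is evidently a typo (as literally written every independent set satisfies it); you correctly work with the intended definition $A\in\Omega(G[N[A]])$. Both halves of your argument check out. For the Hall step: if $T\subseteq N(A)$ violates Hall's condition and $Y=N(T)\cap A$, then $A\setminus Y$ is independent, every $t\in T$ has all its $A$-neighbors in $Y$, so $N(A\setminus Y)\subseteq N(A)\setminus T$ and $d(A\setminus Y)\geq d(A)+\left\vert T\right\vert -\left\vert Y\right\vert >d(A)$, contradicting $d(A)=id(G)$. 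For the deduction of $\alpha(G[N[A]])\leq\left\vert A\right\vert$ from a matching saturating $N(A)$: the injection $v\mapsto M(v)$ from $S\cap N(A)$ into $A\setminus S$ is valid precisely because $S$ is independent and $A\cap N(A)=\emptyset$ (which holds since $A$ is independent), giving $\left\vert S\right\vert \leq\left\vert A\right\vert$. This matching-based route is in fact the standard one in the cited literature: the existence of a matching of $N(A)$ into $A$ for a critical independent set $A$ is Larson's matching lemma (see \cite{Larson2007}), and the proof in \cite{LevMan2012b} rests on the same device, so your blind reconstruction essentially recovers the original argument rather than diverging from it.
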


Combining Theorem \ref{th6} and Proposition \ref{Prop2} one can conclude with
the following.

\begin{corollary}
\label{th3}\cite{ButTruk2007} Every critical independent set can be enlarged
to a maximum independent set.
\end{corollary}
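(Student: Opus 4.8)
The plan is to derive the statement directly by composing the two facts that immediately precede it, since no additional structural analysis of the graph is required. Let $A$ be an arbitrary critical independent set of $G$; the goal is to exhibit a maximum independent set $S\in\Omega(G)$ with $A\subseteq S$, which is precisely what ``enlarged to a maximum independent set'' should mean.

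First I would invoke Proposition \ref{Prop2} to upgrade the hypothesis: because $A$ is critical, it is in particular a local maximum independent set of $G$. This is the step that converts the global extremal property defining criticality (namely $d(A)=id(G)$) into the local membership condition $A\in\Omega(G[A])$ that Theorem \ref{th6} is phrased to consume. Then I would apply Theorem \ref{th6} to this local maximum independent set to obtain some $S\in\Omega(G)$ with $A\subseteq S$, which is exactly the desired conclusion.

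Since the two cited results chain together without leaving a gap, there is essentially no hard step here: the entire content has been pushed into Proposition \ref{Prop2} and Theorem \ref{th6}. The only point that warrants a moment's care is confirming that the two statements are compatible, i.e. that the notion of ``local maximum independent set'' guaranteed by Proposition \ref{Prop2} is the same notion that Theorem \ref{th6} takes as input; once that is checked, the corollary follows at once. I would therefore keep the argument to a single sentence of composition rather than re-proving either ingredient.
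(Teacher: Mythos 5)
Your proposal is correct and matches the paper exactly: the corollary is stated right after the sentence ``Combining Theorem \ref{th6} and Proposition \ref{Prop2} one can conclude with the following,'' which is precisely your composition of Proposition \ref{Prop2} (critical $\Rightarrow$ local maximum independent) with Theorem \ref{th6} (local maximum independent $\Rightarrow$ contained in some member of $\Omega(G)$).
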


For a graph $G$, let us denote
\begin{gather*}
\mathrm{\ker}(G)=%
{\displaystyle\bigcap}
\left\{  A:A\text{ \textit{is a critical independent set}}\right\}  \text{
\cite{LevMan2012a},}\\
\mathrm{MaxCritIndep}(G)=\left\{  S:S\text{ \textit{is a maximum critical
independent set}}\right\}  \text{,}\\
\mathrm{nucleus}(G)=%
{\displaystyle\bigcap}
\mathrm{MaxCritIndep}(G)\text{ \cite{JarLevMan2015a}, and }\mathrm{diadem}(G)=%
{\displaystyle\bigcup}
\mathrm{MaxCritIndep}(G)\text{ \cite{LevMan2014}.}%
\end{gather*}

Clearly, $\mathrm{\ker}(G)\subseteq\mathrm{nucleus}(G)$ and, by Corollary
\ref{th3}, the inclusion $\mathrm{diadem}(G)\subseteq\mathrm{corona}(G)$ is
true for each graph $G$.

\begin{theorem}
\label{th4}\cite{LevMan2012a} For a graph $G$, the following assertions are true:

\emph{(i)} $\mathrm{\ker}(G)\subseteq\mathrm{core}(G)$;

\emph{(ii)} if $A$ and $B$ are critical in $G$, then $A\cup B$ and $A\cap B$
are critical as well.
\end{theorem}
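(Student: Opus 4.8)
My plan is to prove (ii) first and then derive (i) from it together with one extra fact about maximum independent sets.

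For (ii) I would use that $d$ is supermodular. For arbitrary $X,Y\subseteq V(G)$ one has $N(X\cup Y)=N(X)\cup N(Y)$ and $N(X\cap Y)\subseteq N(X)\cap N(Y)$, while $|X\cup Y|+|X\cap Y|=|X|+|Y|$; substituting into $d(Z)=|Z|-|N(Z)|$ and using $|N(X)\cup N(Y)|+|N(X)\cap N(Y)|=|N(X)|+|N(Y)|$ gives $d(X\cup Y)+d(X\cap Y)\ge d(X)+d(Y)$. If $A,B$ are critical, then $d(A)=d(B)=d(G)$, so $d(A\cup B)+d(A\cap B)\ge 2d(G)$; since each term is at most $d(G)$ by definition of the critical difference, both equal $d(G)$, i.e. $A\cup B$ and $A\cap B$ are critical. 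Note that $A\cap B$ is automatically independent when $A,B$ are, so this closure stays within the critical independent sets; the union need only be read as a critical set, which is all that is used below.

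Turning to (i), the intersection part of (ii), iterated over the finitely many critical independent sets, shows that $\ker(G)$ is itself a critical independent set: it is independent as an intersection of independent sets, and critical because intersections of critical sets are critical. In particular $\ker(G)$ is contained in every critical independent set. The crux is then the claim that $A\cap S$ is critical whenever $A$ is a critical independent set and $S\in\Omega(G)$. Here I would use the maximality of $S$: since $N(S)=V(G)\setminus S$, for the independent set $A\setminus S$ the set $(S\setminus N(A\setminus S))\cup(A\setminus S)$ is independent, and comparing its size with $|S|=\alpha(G)$ forces $|A\setminus S|\le|N(A\setminus S)\cap S|$; a short computation rewrites this as $d(A\cup S)\le d(S)$. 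Combined with the supermodular inequality $d(A\cup S)+d(A\cap S)\ge d(A)+d(S)$ this yields $d(A\cap S)\ge d(A)=id(G)$, while $d(A\cap S)\le id(G)$ holds because $A\cap S$ is independent; hence $A\cap S$ is critical.

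To finish, apply the claim with $A=\ker(G)$: for every $S\in\Omega(G)$ the set $\ker(G)\cap S$ is a critical independent set, hence, since $\ker(G)$ is contained in all critical independent sets, $\ker(G)\subseteq\ker(G)\cap S\subseteq S$. Letting $S$ range over $\Omega(G)$ gives $\ker(G)\subseteq\mathrm{core}(G)$. The main obstacle is exactly this auxiliary claim: one has to squeeze criticality of $A\cap S$ out of that of $A$, and the decisive step is the exchange inequality $d(A\cup S)\le d(S)$, which rests on the maximality of $S$ rather than on any feature of $A$.
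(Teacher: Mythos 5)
The paper does not actually prove Theorem \ref{th4}; it is quoted from \cite{LevMan2012a}, so there is no in-paper argument to measure your proof against. Judged on its own merits, your proof is correct. The supermodularity inequality $d(X\cup Y)+d(X\cap Y)\geq d(X)+d(Y)$ is derived properly (it needs only $N(X\cup Y)=N(X)\cup N(Y)$, $N(X\cap Y)\subseteq N(X)\cap N(Y)$, and modularity of cardinality), and sandwiching $d(A\cup B)$ and $d(A\cap B)$ below $d(G)$ then forces both to equal $d(G)$. You are also right to flag that the union of two critical independent sets need not be independent (in $C_{4}$ the two maximum independent sets are critical, yet their union is all of $V$), so part \emph{(ii)} must be read as closure of critical \emph{sets}; note that silently identifying ``critical independent set'' (difference $id(G)$) with ``independent critical set'' (difference $d(G)$) uses Zhang's theorem $d(G)=id(G)$, which the paper does state, so this is legitimate. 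For part \emph{(i)}, your chain of reasoning is sound: $\ker(G)$ is a critical independent set by finitely iterating the intersection half of \emph{(ii)}; for a critical independent $A$ and $S\in\Omega(G)$, the maximality of $S$ gives $|A\setminus S|\leq|N(A\setminus S)\cap S|$ via the independent set $(S\setminus N(A\setminus S))\cup(A\setminus S)$, which indeed rewrites as $d(A\cup S)\leq d(S)$ (using $N(A)\cap S=N(A\setminus S)\cap S$, a consequence of the independence of $S$); supermodularity then yields $d(A\cap S)\geq d(A)$, so $A\cap S$ is critical, and applying this with $A=\ker(G)$ gives $\ker(G)\subseteq S$ for every $S\in\Omega(G)$. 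One small economy: you never need the fact that $\ker(G)$ itself is critical; applying your claim to an arbitrary critical independent set $A$ already gives $\ker(G)\subseteq A\cap S\subseteq S$ for every $S\in\Omega(G)$, since $A\cap S$ is a critical independent set and $\ker(G)$ lies inside every such set by definition.
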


Let us consider the graphs $G_{1}$ and $G_{2}$ from Figure \ref{fig22222}:
$\mathrm{core}(G_{1})=\left\{  a,b,c,d\right\}  $ and it is a critical set,
while $\mathrm{core}(G_{2})=\left\{  x,y,z,w\right\}  $ and it is not
critical.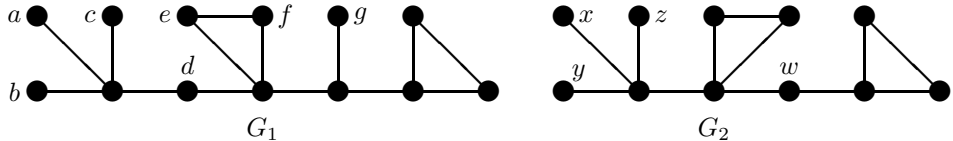
\begin{figure}[h]
\setlength{\unitlength}{1cm}\begin{picture}(5,1.8)\thicklines
\multiput(1,0.5)(1,0){7}{\circle*{0.29}}
\multiput(1,1.5)(1,0){6}{\circle*{0.29}}
\put(1,0.5){\line(1,0){6}}
\put(1,1.5){\line(1,-1){1}}
\put(2,0.5){\line(0,1){1}}
\put(3,1.5){\line(1,-1){1}}
\put(3,1.5){\line(1,0){1}}
\put(4,0.5){\line(0,1){1}}
\put(5,0.5){\line(0,1){1}}
\put(6,1.5){\line(1,-1){1}}
\put(6,0.5){\line(0,1){1}}
\put(0.7,1.5){\makebox(0,0){$a$}}
\put(0.7,0.5){\makebox(0,0){$b$}}
\put(1.7,1.5){\makebox(0,0){$c$}}
\put(2.7,1.5){\makebox(0,0){$e$}}
\put(4.3,1.5){\makebox(0,0){$f$}}
\put(5.3,1.5){\makebox(0,0){$g$}}
\put(3,0.85){\makebox(0,0){$d$}}
\put(4,0){\makebox(0,0){$G_{1}$}}
\multiput(8,0.5)(1,0){6}{\circle*{0.29}}
\multiput(8,1.5)(1,0){5}{\circle*{0.29}}
\put(8,0.5){\line(1,0){5}}
\put(8,1.5){\line(1,-1){1}}
\put(9,0.5){\line(0,1){1}}
\put(10,0.5){\line(0,1){1}}
\put(10,0.5){\line(1,1){1}}
\put(10,1.5){\line(1,0){1}}
\put(12,0.5){\line(0,1){1}}
\put(12,1.5){\line(1,-1){1}}
\put(8.3,1.5){\makebox(0,0){$x$}}
\put(8.2,0.75){\makebox(0,0){$y$}}
\put(9.3,1.5){\makebox(0,0){$z$}}
\put(11,0.8){\makebox(0,0){$w$}}
\put(10,0){\makebox(0,0){$G_{2}$}}
\end{picture}\caption{Both $G_{1}$ and $G_{2}$\ are not K\"{o}nig-Egerv\'{a}ry
graphs.}%
\label{fig22222}%
\end{figure}

Moreover,
\[
\mathrm{\ker}(G_{1})=\left\{  a,b,c\right\}  \subset\mathrm{core}%
(G_{1})\subset\left\{  a,b,c,d,g\right\}  =\mathrm{nucleus}(G_{1}),
\]
as $\mathrm{MaxCritIndep}(G)=\left\{  \left\{  a,b,c,d,e,g\right\}  ,\left\{
a,b,c,d,f,g\right\}  \right\}  $.

In addition, notice that $\mathrm{diadem}(G_{1})\subsetneq\mathrm{corona}%
(G_{1})$.

\begin{theorem}
\label{Prop1}\cite{JarLevMan2015b} Let $\emptyset\neq\Gamma\subseteq\Omega
(G)$. If $%
{\displaystyle\bigcup}
\Gamma$ is critical, then $%
{\displaystyle\bigcap}
\Gamma$ is critical as well.
\end{theorem}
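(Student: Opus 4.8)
The plan is to bypass the supermodularity arguments behind Theorem \ref{th4} and instead exploit directly that every member of $\Gamma$ is a \emph{maximal} independent set. Write $\Gamma=\{S_{i}\}$, put $U=\bigcup\Gamma$, $W=\bigcap\Gamma$, and let $n=\left\vert V(G)\right\vert $. The first step I would establish is the identity $N(S)=V(G)\setminus S$ valid for any maximal independent set $S$: no vertex of $S$ is adjacent to $S$ (independence), while every vertex outside $S$ is adjacent to $S$ (maximality). Since each $S_{i}\in\Omega(G)$ is in particular maximal, this holds for every $S_{i}$.

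The second step turns this into information about $U$ and $W$ through the elementary identity $N\!\left(\bigcup_{i}A_{i}\right)=\bigcup_{i}N(A_{i})$ together with De Morgan's law. For the union,
\[
N(U)=\bigcup_{i}N(S_{i})=\bigcup_{i}\bigl(V(G)\setminus S_{i}\bigr)=V(G)\setminus W,
\]
so that $d(U)=\left\vert U\right\vert -\left\vert N(U)\right\vert =\left\vert U\right\vert +\left\vert W\right\vert -n$. For the intersection one has only the inclusion $N(W)\subseteq\bigcap_{i}N(S_{i})=V(G)\setminus U$ (a vertex adjacent to $W$ is adjacent to each $S_{i}$), whence $\left\vert N(W)\right\vert \le n-\left\vert U\right\vert $ and therefore
\[
d(W)=\left\vert W\right\vert -\left\vert N(W)\right\vert \ \ge\ \left\vert W\right\vert +\left\vert U\right\vert -n\ =\ d(U).
\]

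The third step is immediate: by hypothesis $d(U)=d(G)$, so the chain above gives $d(W)\ge d(G)$, while $d(W)\le d(G)$ holds for every subset of $V(G)$ by definition of the critical difference. Hence $d(W)=d(G)$, i.e. $\bigcap\Gamma$ is critical; note that $W$ is automatically independent, being a subset of any $S_{i}$, so it is genuinely a critical independent set. The nonemptiness of $\Gamma$ is used only to guarantee that $U$ and $W$ are the ordinary finite union and intersection.

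The main subtlety is recognizing why the obvious routes fail and which inequality to aim for. Theorem \ref{th4}(ii), which closes the family of critical sets under unions and intersections, cannot be invoked here: its hypothesis requires the combined sets to be \emph{critical}, whereas the members of $\Gamma$ are only known to be maximum independent (and a maximum independent set is critical precisely when $G$ is K\"{o}nig-Egerv\'{a}ry). Likewise, bounding $d(U)+d(W)$ from below via the submodularity of $X\mapsto\left\vert N(X)\right\vert $ yields a symmetric estimate that is in general strictly weaker than $2d(G)$, so it cannot force $d(W)=d(G)$ from $d(U)=d(G)$. The decisive move is to use maximality of each $S_{i}$ to pin $N(U)$ down \emph{exactly} as $V(G)\setminus W$, which replaces a two-sided sum bound by the one-sided monotonicity $d(W)\ge d(U)$; everything else is routine set algebra.
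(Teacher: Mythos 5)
Your proof is correct. A point of order first: the paper states Theorem \ref{Prop1} as a quoted result from \cite{JarLevMan2015b} and gives no proof of it, so there is no in-paper argument to compare against; the comparison can only be with that external source, which works with critical-set machinery in the spirit of Theorem \ref{th4} rather than with your direct computation. Your argument is elementary and verifies completely: maximality of each $S_{i}\in\Omega(G)$ gives $N(S_{i})=V(G)\setminus S_{i}$; since neighborhoods distribute over unions, $N\left(\bigcup\Gamma\right)=V(G)\setminus\bigcap\Gamma$, so $d\left(\bigcup\Gamma\right)=\left\vert\bigcup\Gamma\right\vert+\left\vert\bigcap\Gamma\right\vert-\left\vert V(G)\right\vert$; monotonicity of $N$ plus De Morgan give $N\left(\bigcap\Gamma\right)\subseteq V(G)\setminus\bigcup\Gamma$, hence $d\left(\bigcap\Gamma\right)\geq d\left(\bigcup\Gamma\right)$; and the squeeze against $d(G)$ closes the proof, with $\bigcap\Gamma$ automatically independent and Zhang's theorem $d(G)=id(G)$ converting the conclusion into criticality of that independent set. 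You also read the hypothesis correctly: $\bigcup\Gamma$ need not be independent, so its criticality means exactly $d\left(\bigcup\Gamma\right)=d(G)$, which is what you use. Worth noting: nowhere do you use that the $S_{i}$ are maximum --- maximality suffices --- so your argument in fact establishes the stronger statement that $d\left(\bigcap\Gamma\right)\geq d\left(\bigcup\Gamma\right)$ holds for every nonempty family $\Gamma$ of maximal independent sets, with the theorem as the special case $\Gamma\subseteq\Omega(G)$. That one-sided monotonicity, obtained by pinning $N\left(\bigcup\Gamma\right)$ down exactly, is precisely what the generic supermodularity estimate $d(A\cup B)+d(A\cap B)\geq d(A)+d(B)$ cannot deliver here, as you observe, since the individual members of $\Gamma$ need not be critical.
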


It is well-known that $\alpha(G)+\mu(G)\leq\left\vert V(G)\right\vert $ holds
for every graph $G$. Recall that if $\alpha(G)+\mu(G)=\left\vert
V(G)\right\vert $, then $G$ is a \textit{K\"{o}nig-Egerv\'{a}ry graph}
\cite{Deming1979,Sterboul1979}. For example, each bipartite graph is a
K\"{o}nig-Egerv\'{a}ry graph. Various properties of K\"{o}nig-Egerv\'{a}ry
graphs can be found in \cite{Bonomo2013,Korach2006,levm4,LevMan2013b}.

\begin{theorem}
\label{th5}\cite{Larson2011,LevMan2012b} For a graph $G$, the following
assertions are equivalent:

\emph{(i)} $G$ is a K\"{o}nig-Egerv\'{a}ry graph;

\emph{(ii)} there exists some maximum independent set which is critical;

\emph{(iii)} each of its maximum independent sets is critical.
\end{theorem}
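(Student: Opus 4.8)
The plan is to establish the cycle of implications (i) $\Rightarrow$ (iii) $\Rightarrow$ (ii) $\Rightarrow$ (i). The single most useful preliminary observation is that any \emph{maximal} independent set $S$ satisfies $N(S)=V(G)\setminus S$, so that $d(S)=\left\vert S\right\vert -\left\vert N(S)\right\vert =2\left\vert S\right\vert -\left\vert V(G)\right\vert$; in particular every maximum independent set has the \emph{same} difference $d(S)=2\alpha(G)-\left\vert V(G)\right\vert$. Since $d(G)=id(G)$ by Zhang, a maximum independent set $S$ is critical exactly when this common value equals $d(G)$. Consequently (ii) and (iii) collapse into one another: either all maximum independent sets attain $d(G)$ or none does, and (iii) $\Rightarrow$ (ii) is immediate because $\Omega(G)\neq\emptyset$. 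Thus the real content is the equivalence of (ii)/(iii) with (i), i.e. with the identity $d(G)=2\alpha(G)-\left\vert V(G)\right\vert$.

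For (ii) $\Rightarrow$ (i) I would fix a maximum independent set $S$ with $d(S)=d(G)$ and probe $d$ against the sets $S\cup T$ for $T\subseteq V(G)\setminus S$. A short computation using $N(S\cup T)=(V(G)\setminus S)\cup(N(T)\cap S)$ yields
\[
d(S\cup T)=d(S)+\left\vert T\right\vert -\left\vert N(T)\cap S\right\vert .
\]
Because $d(S)=d(G)=\max\{d(X):X\subseteq V(G)\}$, the left side cannot exceed $d(S)$, forcing $\left\vert N(T)\cap S\right\vert \geq\left\vert T\right\vert$ for every $T\subseteq V(G)\setminus S$. This is exactly Hall's condition for the bipartite graph joining $V(G)\setminus S$ to $S$, so there is a matching that saturates $V(G)\setminus S$ into $S$; it has $\left\vert V(G)\setminus S\right\vert =\left\vert V(G)\right\vert -\alpha(G)$ edges. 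Since $V(G)\setminus S$ is a vertex cover we always have $\mu(G)\leq\left\vert V(G)\right\vert -\alpha(G)$, whence equality holds and $G$ is K\"{o}nig-Egerv\'{a}ry.

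For (i) $\Rightarrow$ (iii) I would run the dual matching argument. Let $G$ be K\"{o}nig-Egerv\'{a}ry, fix a maximum matching $M$ with $\left\vert M\right\vert =\mu(G)$, and let $S$ be an arbitrary maximum independent set. By the first observation it suffices to show $d(I)\leq d(S)=2\alpha(G)-\left\vert V(G)\right\vert$ for every independent $I$, since $S$ is itself independent and $d(G)=id(G)$ places the maximum of $d$ on an independent set. Given an independent $I$, every vertex of $I$ that $M$ saturates has its partner outside $I$ (by independence) and hence in $N(I)$, and distinct saturated vertices give distinct partners; therefore $\left\vert N(I)\right\vert \geq\left\vert I\right\vert -u$, where $u$ is the number of $M$-unsaturated vertices of $I$. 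As $M$ is maximum there are exactly $\left\vert V(G)\right\vert -2\mu(G)$ unsaturated vertices in total, so $u\leq\left\vert V(G)\right\vert -2\mu(G)$ and $d(I)=\left\vert I\right\vert -\left\vert N(I)\right\vert \leq\left\vert V(G)\right\vert -2\mu(G)$. The hypothesis $\alpha(G)+\mu(G)=\left\vert V(G)\right\vert$ is precisely what turns the right-hand side into $2\alpha(G)-\left\vert V(G)\right\vert =d(S)$, so $S$ is critical.

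The main obstacle is conceptual rather than computational: one must notice that the generic matching bound $d(I)\leq\left\vert V(G)\right\vert -2\mu(G)$ holds for \emph{every} graph, and that it tightens to the required $2\alpha(G)-\left\vert V(G)\right\vert$ exactly when $\alpha(G)+\mu(G)=\left\vert V(G)\right\vert$; this symmetry is what makes the K\"{o}nig-Egerv\'{a}ry hypothesis both necessary and sufficient, mirroring the Hall argument run in the opposite direction. Some care is also needed in (ii) $\Rightarrow$ (i) to verify the neighbourhood identity $N(S\cup T)=(V(G)\setminus S)\cup(N(T)\cap S)$ and to apply Hall's theorem to the correct bipartite graph; everything else is bookkeeping. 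Should one prefer to avoid Hall in the forward direction, Corollary \ref{th3} offers an alternative route to (i) $\Rightarrow$ (iii), by extending a critical independent set to a maximum one and comparing their differences.
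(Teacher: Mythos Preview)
The paper does not supply a proof of this theorem; it is quoted from \cite{Larson2011,LevMan2012b} as a known result and used as a tool elsewhere. So there is no ``paper's own proof'' to compare against, and your proposal should be judged on its own merits.

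Your argument is correct. The key observation that every maximal independent set $S$ satisfies $N(S)=V(G)\setminus S$, hence $d(S)=2\alpha(G)-\lvert V(G)\rvert$ for every $S\in\Omega(G)$, immediately collapses (ii) and (iii). For (ii) $\Rightarrow$ (i) you correctly invoke Zhang's identity $d(G)=id(G)$ so that $d(S)$ is maximal over \emph{all} subsets, not just independent ones; this is exactly what lets you compare $d(S\cup T)$ with $d(S)$ for arbitrary $T\subseteq V(G)\setminus S$ and extract Hall's condition. The matching-counting bound $d(I)\leq \lvert V(G)\rvert-2\mu(G)$ in (i) $\Rightarrow$ (iii) is also clean and standard. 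One small quibble: your closing remark about using Corollary~\ref{th3} ``to avoid Hall in the forward direction'' is garbled---Hall appears in (ii) $\Rightarrow$ (i), not in (i) $\Rightarrow$ (iii), and it is not clear how extending a critical set to a maximum one would replace either step---but this aside does not affect the proof itself.
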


If $\Gamma,\Gamma^{\prime}$ are two collections of sets, we write
$\Gamma^{\prime}\vartriangleleft\Gamma$ if $%
{\displaystyle\bigcup}
\Gamma^{\prime}\subseteq%
{\displaystyle\bigcup}
\Gamma$ and $%
{\displaystyle\bigcap}
\Gamma\subseteq%
{\displaystyle\bigcap}
\Gamma^{\prime}$ \cite{JarLevMan2015a}. Clearly, the relation
$\vartriangleleft$ is a preorder.

\begin{theorem}
\label{th1}\cite{JarLevMan2015a} Let $\emptyset\neq\Gamma\subseteq\Omega(G)$.

\emph{(i)} If $\Gamma^{\prime}\subseteq\mathrm{Ind}(G)$ is such that
$\Gamma^{\prime}\vartriangleleft\Gamma$, then $\left\vert
{\displaystyle\bigcap}
\Gamma^{\prime}\right\vert +\left\vert
{\displaystyle\bigcup}
\Gamma^{\prime}\right\vert \leq\left\vert
{\displaystyle\bigcap}
\Gamma\right\vert +\left\vert
{\displaystyle\bigcup}
\Gamma\right\vert $.

\emph{(ii)} $2\alpha(G)\leq\left\vert
{\displaystyle\bigcap}
\Gamma\right\vert +\left\vert
{\displaystyle\bigcup}
\Gamma\right\vert $.

\emph{(iii)} If, in addition, $G$ is a K\"{o}nig-Egerv\'{a}ry graph, then
$\left\vert
{\displaystyle\bigcap}
\Gamma\right\vert +\left\vert
{\displaystyle\bigcup}
\Gamma\right\vert =2\alpha(G)$, and, in particular, $\left\vert
\mathrm{corona}(G)\right\vert +\left\vert \mathrm{core}(G)\right\vert
=2\alpha(G)$.
\end{theorem}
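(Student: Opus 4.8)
Write $A=\bigcap\Gamma$, $B=\bigcup\Gamma$, $A'=\bigcap\Gamma'$, $B'=\bigcup\Gamma'$ (assuming $\Gamma'\neq\emptyset$). The plan is to first reduce (i) to a single cardinality inequality. The relation $\Gamma'\vartriangleleft\Gamma$ says $A\subseteq A'$ and $B'\subseteq B$, and since always $A'\subseteq B'$, I get the chain $A\subseteq A'\subseteq B'\subseteq B$. Setting $C=A'\setminus A$ and $D=B\setminus B'$, both sit inside the ``variable zone'' $B\setminus A$ and are disjoint (one lies in $B'$, the other avoids $B'$). A direct count gives $|A'|+|B'|=|A|+|B|+|C|-|D|$, so (i) is \emph{equivalent} to $|C|\le|D|$, i.e.\ to the existence of an injection $A'\setminus A\hookrightarrow B\setminus B'$.

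Next I would isolate two structural facts. First, because every $S\in\Gamma$ is a \emph{maximum}, hence maximal, independent set, $N(S)=V(G)\setminus S$; as $N$ distributes over unions, $N(B)=\bigcup_{S\in\Gamma}N(S)=V(G)\setminus A$, whence $|A|+|B|=|V(G)|+d(B)$. Thus the quantity $|\bigcap\Gamma|+|\bigcup\Gamma|$ is governed by the difference of $\bigcup\Gamma$. Second, every neighbour of $A'$ lying in $B$ must lie in $D$: if $w\in B$ were adjacent to some $v\in A'=\bigcap\Gamma'$ and also $w\in B'$, then $w\in T$ for some $T\in\Gamma'$ while $v\in A'\subseteq T$, contradicting independence of $T$. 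Hence $N(A')\cap B\subseteq D$, which in particular pushes all $B$-neighbours of $C$ into $D$.

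The engine for producing the injection is an exchange lemma: for any independent set $I$ and any maximum independent set $J$ there is a matching saturating $I\setminus J$ with every edge landing in $N(I)\cap J$. One proves it by applying Hall to the bipartite graph between $I\setminus J$ and $J$, since a violating $X\subseteq I\setminus J$ with $|N(X)\cap J|<|X|$ would make $(J\setminus N(X))\cup X$ an independent set larger than $J$. Taking $I=A'$ and any $J\in\Gamma$ (so $A\subseteq J$ and $A'\setminus J=C\setminus J$) yields $|C\setminus J|\le|N(A')\cap J|\le|D\cap J|$. \textbf{The hard part} is to upgrade these per-member inequalities into one global injection $C\hookrightarrow D$, equivalently to verify Hall's condition $|N(C_0)\cap B|\ge|C_0|$ for \emph{every} $C_0\subseteq A'\setminus A$. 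A single $J$ controls only $C\setminus J$, and in general no member of $\Gamma$ (indeed no maximum independent set at all) need avoid a prescribed $C_0$, so the exchanges coming from different members of $\Gamma$ must be amalgamated through a defect/augmenting-path argument over the whole family; it is exactly here that the hypothesis $\Gamma\subseteq\Omega(G)$ (maximum, not merely maximal) is indispensable, since the members of $\Gamma'$ supply only independence.

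Granting (i), the rest is short. For (ii) I take $\Gamma'=\{S\}$ with $S\in\Gamma$: then $\bigcap\Gamma'=\bigcup\Gamma'=S$ and $\{S\}\vartriangleleft\Gamma$, so (i) gives $2\alpha(G)=2|S|\le|\bigcap\Gamma|+|\bigcup\Gamma|$. For (iii), assume $G$ is König-Egerváry. By Theorem \ref{th5} every maximum independent set is critical, so each $S\in\Gamma$ is critical, and by Theorem \ref{th4}(ii) the finite union $B=\bigcup\Gamma$ is critical as well; hence $d(B)=d(G)=2\alpha(G)-|V(G)|$, the last step because a maximum critical independent set $S$ satisfies $d(S)=|S|-|N(S)|=\alpha(G)-(|V(G)|-\alpha(G))$. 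Substituting into $|A|+|B|=|V(G)|+d(B)$ from the first structural fact gives $|\bigcap\Gamma|+|\bigcup\Gamma|=2\alpha(G)$, and specialising to $\Gamma=\Omega(G)$ yields $|\mathrm{corona}(G)|+|\mathrm{core}(G)|=2\alpha(G)$.
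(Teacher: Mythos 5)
The paper quotes Theorem \ref{th1} from \cite{JarLevMan2015a} and gives no proof of it, so your proposal must stand on its own; it does not, because of a genuine gap at exactly the point you label ``the hard part.'' Your reduction of (i) to the injection claim $|A'\setminus A|\le|B\setminus B'|$ is correct, both structural facts are correct ($N(B)=V(G)\setminus A$ because the members of $\Gamma$ are maximal independent sets; $N(A')\cap B\subseteq D$ because the members of $\Gamma'$ are independent), and the single-set exchange lemma is a valid Hall argument. What is never carried out is the passage from the per-member bounds $|C\setminus J|\le|D\cap J|$, $J\in\Gamma$, to Hall's condition for the whole of $C$: announcing ``a defect/augmenting-path argument over the whole family'' names the obstacle, it does not overcome it. The missing statement is precisely the Set and Collection Lemma of Levit and Mandrescu \cite{LevManLemma2011}: for every independent set $I$ and every non-empty collection $\Lambda\subseteq\Omega(G)$ there is a matching from $I\setminus\bigcap\Lambda$ into $\bigcup\Lambda\setminus I$. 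That lemma is the entire combinatorial content of part (i), and it is a theorem with its own nontrivial proof, not a routine amalgamation of single-set exchanges. Note that once it is granted, your own machinery finishes instantly: taking $I=A'$ and $\Lambda=\Gamma$, the matching sends $C=A'\setminus A$ along edges of $G$ into $B\setminus A'$, and your second structural fact forces every endpoint into $D$, which is the desired injection $C\hookrightarrow D$. As written, however, (i) is unproved, and since your (ii) is deduced from (i), it inherits the same gap.

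By contrast, your argument for (iii) is complete and independent of (i) and (ii): maximality of the members of $\Gamma$ gives $|A|+|B|=|V(G)|+d(B)$; Theorem \ref{th5} makes every member of $\Gamma$ critical; Theorem \ref{th4}(ii), iterated over the finite family, makes $B=\bigcup\Gamma$ critical, so $d(B)=d(G)=2\alpha(G)-|V(G)|$ in a K\"{o}nig-Egerv\'{a}ry graph; substitution yields $\left\vert\bigcap\Gamma\right\vert+\left\vert\bigcup\Gamma\right\vert=2\alpha(G)$, and specializing to $\Gamma=\Omega(G)$ gives the core/corona identity. So the salvageable portion of your proposal is (iii); the heart of the theorem, part (i) and with it part (ii), still requires the collection-level matching lemma, either proved or cited.
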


Let us notice that if $S\in\mathrm{Ind}(G)$, then $G[N[S]]$ is not necessarily
a K\"{o}nig-Egerv\'{a}ry graph. For instance, consider the graph $G_{1}$ from
Figure \ref{fig22222}, and $S_{1}=\left\{  d,g\right\}  $, $S_{2}=\left\{
d,e,g\right\}  $. Then, $G_{1}[N[S_{1}]]$ is a K\"{o}nig-Egerv\'{a}ry graph,
while $G_{1}[N[S_{2}]]$ is not a K\"{o}nig-Egerv\'{a}ry graph.

\begin{theorem}
\cite{Larson2011}\label{th2} For every graph $G$, there is some $X\subseteq
V\left(  G\right)  $, such that:

\emph{(i)} $X=N[S]$ for every $S\in\mathrm{MaxCritIndep}(G)$;

\emph{(ii)} $G[X]$ is a K\"{o}nig-Egerv\'{a}ry graph.
\end{theorem}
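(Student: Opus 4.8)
The plan is to prove (ii) in the strong local form ``$G[N[S]]$ is K\"{o}nig-Egerv\'{a}ry and $d(G-N[S])=0$ for \emph{every} critical independent set $S$'', and then to deduce (i) by induction on $|V(G)|$, peeling off a nonempty critical ``core'' so that the two closed neighbourhoods to be compared live in a strictly smaller graph.

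First I would fix a critical independent set $S$ and show that the bipartite graph between $S$ and $N(S)$ has a matching saturating $N(S)$. If Hall's condition failed there would be $T\subseteq N(S)$ with $|N(T)\cap S|<|T|$; then $S':=S\setminus N(T)$ is independent, each vertex of $T$ loses all of its $S$-neighbours so $N(S')\subseteq N(S)\setminus T$, and hence $d(S')\ge(|S|-|T|+1)-(|N(S)|-|T|)=d(G)+1$, contradicting the maximality of $d(G)$. Such a matching has size $|N(S)|$ while $S$ is independent, so $\alpha(G[N[S]])+\mu(G[N[S]])\ge|S|+|N(S)|=|N[S]|$; the reverse inequality always holds, so $G[N[S]]$ is K\"{o}nig-Egerv\'{a}ry with $\alpha=|S|$ and $\mu=|N(S)|$, which gives (ii) once $X=N[S]$ is known. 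Moreover, for any independent $T\subseteq V(G)\setminus N[S]$ the set $S\cup T$ is independent with $N(S\cup T)=N(S)\sqcup N_{G-N[S]}(T)$, whence $d(S\cup T)=d(G)+d_{G-N[S]}(T)$; maximality of $d(G)$ forces $d_{G-N[S]}(T)\le 0$, i.e. $d(G-N[S])=0$.

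For (i), take $S_1,S_2\in\mathrm{MaxCritIndep}(G)$ and set $S_0:=S_1\cap S_2$, critical by Theorem~\ref{th4}. From $d(S_1\cup S_2)+d(S_0)\ge d(S_1)+d(S_2)=2d(G)$ together with $d\le d(G)$ everywhere, one gets $d(S_0)=d(S_1\cup S_2)=d(G)$ and $N(S_0)=N(S_1)\cap N(S_2)$. With $L_0:=G-N[S_0]$ (so $d(L_0)=0$) and $R_i:=S_i\setminus S_0\subseteq L_0$, the identity $d(S_0\cup R_i)=d(G)+d_{L_0}(R_i)$ gives $d_{L_0}(R_i)=0$; and if some critical set of $L_0$ were larger than $R_i$, adjoining it to $S_0$ would produce a critical set of $G$ exceeding $\psi:=|S_1|$, so each $R_i$ is a \emph{maximum} critical independent set of $L_0$. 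A short computation yields $N[S_i]=N[S_0]\sqcup N_{L_0}[R_i]$, so $N[S_1]=N[S_2]$ is \emph{equivalent} to $N_{L_0}[R_1]=N_{L_0}[R_2]$ inside $L_0$. Since $\ker(G)\subseteq S_0$ and $\ker(G)$ is critical (hence nonempty whenever $d(G)>0$, as $d(\ker(G))=d(G)$), the set $N[S_0]$ is nonempty unless $d(G)=0$; thus for $d(G)>0$ the graph $L_0$ is strictly smaller and the induction hypothesis closes the argument.

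The hard part will be the residual base case $d(G)=0$ with $S_1\cap S_2=\emptyset$, where $|N(S_i)|=|S_i|$ and $G[N[S_i]]$ carries a perfect matching between $S_i$ and $N(S_i)$, and one must prove $N(S_1)=S_2$ (equivalently $S_1\subseteq N(S_2)$). Purely local moves fail: if $u\in S_1\setminus N(S_2)$ then $u$ and all its neighbours lie in $G-N[S_2]$, and because that graph has critical difference $0$, adjoining $u$ (or all of $S_1\setminus N[S_2]$) to $S_2$ need not raise the difference, so no contradiction appears immediately. I expect this to require a genuinely global argument: following alternating paths in $M_1\triangle M_2$ (or along a fixed maximum matching of $G$) out of such a $u$ to reach a vertex uncovered by $S_2$'s matching, and thereby enlarging $S_2$ to a critical set of size exceeding $\psi$ --- the desired contradiction. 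An alternative route would be to bypass the base case by showing that the intersection graph of $\mathrm{MaxCritIndep}(G)$ is connected, so that any two maximum critical independent sets are linked by a chain of pairwise-intersecting ones and the reduction applies transitively; I suspect, however, that establishing this connectivity is of the same depth as the matching argument, so the matching/alternating-path step is the genuine obstacle.
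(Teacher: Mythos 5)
First, a point of reference: the paper offers no proof of this statement at all --- Theorem \ref{th2} is quoted from \cite{Larson2011} --- so your proposal has to stand on its own merits rather than be compared to an internal argument. Its first two stages do stand: the Hall-type argument showing that $N(S)$ can be matched into $S$ for every critical independent set $S$ (hence $G[N[S]]$ is a K\"{o}nig-Egerv\'{a}ry graph), the identity $d(S\cup T)=d(S)+d_{G-N[S]}(T)$ for independent $T\subseteq V(G)\setminus N[S]$, and the reduction of (i) to the graph $L_0=G-N[S_0]$ with the disjoint sets $R_i=S_i\setminus S_0\in\mathrm{MaxCritIndep}(L_0)$ are all correct. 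But the proof as submitted is genuinely incomplete: the base case of your induction --- $d(G)=0$ with $S_1\cap S_2=\emptyset$, where one must show $S_2=N(S_1)$ --- is not proved, only surveyed, and you yourself flag it as an open obstacle. An induction whose base case is left as a conjecture is a gap, so the proposal is not a proof.

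The irony is that your own tools close this gap, and your diagnosis that ``purely local moves fail'' is mistaken. Suppose $T:=S_1\setminus N[S_2]\neq\emptyset$ and put $B:=S_1\cap N(S_2)$, so $S_1=B\sqcup T$ by disjointness. Since $S_2\cup T$ is independent, your identity gives $d(S_2\cup T)=d(S_2)+d_{G-N[S_2]}(T)\leq d(G)=0$, so $d_{G-N[S_2]}(T)\leq 0$; but equality is forbidden, because $d_{G-N[S_2]}(T)=0$ would make $S_2\cup T$ a critical independent set of size $|S_2|+|T|>\alpha'(G)$, contradicting the maximality of $S_2$. Hence $|N_{G-N[S_2]}(T)|\geq|T|+1$, one better than the bound at which you stopped. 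Now count $N(S_1)$: it contains the $|B|$ distinct $M_2$-partners of the vertices of $B$, where $M_2$ is your matching saturating $N(S_2)$ (these partners lie in $S_2$ and are adjacent to $B\subseteq S_1$), and it also contains $N_{G-N[S_2]}(T)$, which avoids $N[S_2]$ and is therefore disjoint from those partners. Thus $|N(S_1)|\geq|B|+|T|+1=|S_1|+1$, contradicting $d(S_1)=0$. So $T=\emptyset$, i.e. $S_1\subseteq N(S_2)$, and $|S_1|=|S_2|=|N(S_2)|$ forces $S_1=N(S_2)$; symmetrically $S_2=N(S_1)$, whence $N[S_1]=S_1\cup S_2=N[S_2]$. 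With this paragraph inserted your induction is complete --- no alternating paths in $M_1\triangle M_2$ and no connectivity of the intersection graph of $\mathrm{MaxCritIndep}(G)$ are needed.
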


In other words, Theorem \ref{th2}\emph{(i)} claims that $X=N[S]$ does not
depend on the choice of $S\in\mathrm{MaxCritIndep}(G)$. The \textit{critical
independence number} of $G$ is defined as $\alpha^{\prime}(G)=\max\{\left\vert
S\right\vert :S\in\mathrm{MaxCritIndep}(G)\}$ \cite{Larson2011}.

Recently, the following conjectures were validated in \cite{Short2015}.

\begin{conjecture}
\label{conj1}\cite{JarLevMan2015a} If $\left\vert \mathrm{nucleus}%
(G)\right\vert +\left\vert \mathrm{diadem}(G)\right\vert =2\alpha(G)$, then
$G$ is a K\"{o}nig-Egerv\'{a}ry graph.
\end{conjecture}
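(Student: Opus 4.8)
The plan is to push everything into the König-Egerv\'{a}ry subgraph supplied by Theorem \ref{th2}. Write $\Psi=\mathrm{MaxCritIndep}(G)$, so that $\mathrm{nucleus}(G)=\bigcap\Psi$ and $\mathrm{diadem}(G)=\bigcup\Psi$, and let $X=N[S]$ be the common closed neighbourhood guaranteed by Theorem \ref{th2}\emph{(i)}. The first and conceptually decisive step is to show that each $S\in\Psi$ is in fact a \emph{maximum} independent set of $G[X]$, i.e. $\Psi\subseteq\Omega(G[X])$. Here the criticality of $S$ enters: by Proposition \ref{Prop2} a critical independent set is a local maximum independent set, that is, a maximum independent set of the subgraph induced by its closed neighbourhood $N[S]$; since Theorem \ref{th2}\emph{(i)} identifies $N[S]$ with the fixed set $X$, we get $S\in\Omega(G[X])$ for every $S\in\Psi$.

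With $\Psi\subseteq\Omega(G[X])$ secured, I would invoke Theorem \ref{th1}\emph{(iii)} \textit{inside} $G[X]$, which is König-Egerv\'{a}ry by Theorem \ref{th2}\emph{(ii)}. Applying it to the nonempty subfamily $\Gamma=\Psi\subseteq\Omega(G[X])$ gives
\[
\left\vert\mathrm{nucleus}(G)\right\vert+\left\vert\mathrm{diadem}(G)\right\vert=\left\vert\bigcap\Psi\right\vert+\left\vert\bigcup\Psi\right\vert=2\alpha(G[X]).
\]
The essential point is that this identity holds unconditionally, with no appeal yet to the theorem's hypothesis: the sets in $\Psi$ need not be maximum independent sets of $G$, but they are maximum independent sets of $G[X]$, which is exactly what Theorem \ref{th1}\emph{(iii)} consumes.

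Now I would bring in the assumption $\left\vert\mathrm{nucleus}(G)\right\vert+\left\vert\mathrm{diadem}(G)\right\vert=2\alpha(G)$, which combined with the displayed identity forces $\alpha(G[X])=\alpha(G)$. As $X\subseteq V(G)$ always yields $\alpha(G[X])\le\alpha(G)$, this equality is the crucial collapse. Finally, choosing any $S\in\Psi$, the chain $\left\vert S\right\vert=\alpha(G[X])=\alpha(G)$ exhibits $S$ as a maximum independent set of $G$ that is, by its very membership in $\mathrm{MaxCritIndep}(G)$, critical; Theorem \ref{th5} then concludes that $G$ is König-Egerv\'{a}ry.

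The step I expect to carry the real weight is the transfer in the first paragraph: a priori the members of $\mathrm{MaxCritIndep}(G)$ are \emph{not} maximum independent sets of $G$ (they have size $\alpha^{\prime}(G)$, possibly below $\alpha(G)$), so Theorem \ref{th1} cannot be applied to $\Psi$ in $G$ at all. The device that rescues the argument is to apply it inside $G[X]$, where Proposition \ref{Prop2} upgrades each $S$ to a genuine element of $\Omega(G[X])$. One must also resist identifying $\mathrm{nucleus}(G),\mathrm{diadem}(G)$ with $\mathrm{core}(G[X]),\mathrm{corona}(G[X])$: already for $G_{1}$ in Figure \ref{fig22222} one has $\mathrm{core}(G[X])\subsetneq\mathrm{nucleus}(G)$ and $\mathrm{diadem}(G)\subsetneq\mathrm{corona}(G[X])$, and it is only the two \emph{sums} that coincide, precisely because $\Psi$ is a subfamily of $\Omega(G[X])$ rather than all of it, which is exactly the generality Theorem \ref{th1}\emph{(iii)} is designed to accommodate.
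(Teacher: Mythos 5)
Your proof is correct and takes essentially the same route as the paper: there, the statement is a corollary of Theorem \ref{th12} (\emph{(iii)} $\Rightarrow$ \emph{(i)}), whose proof likewise passes to the K\"{o}nig-Egerv\'{a}ry subgraph $G[X]$, $X=N[S]$, shows $\mathrm{MaxCritIndep}(G)\subseteq\Omega\left(  G[X]\right)$ exactly as in your first paragraph (Proposition \ref{Prop2} plus Theorem \ref{th2}\emph{(i)}, packaged as Lemma \ref{Lem1}), applies Theorem \ref{th1}\emph{(iii)} inside $G[X]$, and concludes $\alpha(G[X])=\alpha(G)$, whence K\"{o}nig-Egerv\'{a}ry via Theorem \ref{th5}. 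The only cosmetic difference is that you apply Theorem \ref{th1}\emph{(iii)} directly to the subfamily $\Psi\subseteq\Omega\left(  G[X]\right)$ to get equality at once, whereas the paper routes through the preorder $\vartriangleleft$ and Theorem \ref{th1}\emph{(i)}; the substance is identical.
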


\begin{conjecture}
\label{conj2} \cite{JarLevMan2015b} If $\left\vert \mathrm{diadem}%
(G)\right\vert =\left\vert \mathrm{corona}(G)\right\vert $, then $G$ is a
K\"{o}nig-Egerv\'{a}ry graph.
\end{conjecture}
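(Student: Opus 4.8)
The plan is to prove the statement directly: assuming $\left\vert \mathrm{diadem}(G)\right\vert =\left\vert \mathrm{corona}(G)\right\vert $, I will exhibit a maximum independent set that is critical and then invoke Theorem \ref{th5}. First, since the inclusion $\mathrm{diadem}(G)\subseteq \mathrm{corona}(G)$ holds in every graph (by Corollary \ref{th3}) and these are finite sets of equal cardinality, the hypothesis upgrades to the genuine set equality $\mathrm{diadem}(G)=\mathrm{corona}(G)$. Now fix a maximum critical independent set $S_{0}$, so that $\left\vert S_{0}\right\vert =\alpha^{\prime}(G)$, and put $X=N[S_{0}]$. By Theorem \ref{th2}, $X=N[S]$ for every $S\in\mathrm{MaxCritIndep}(G)$ and $G[X]$ is a K\"{o}nig-Egerv\'{a}ry graph. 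Because every $S\in\mathrm{MaxCritIndep}(G)$ satisfies $S\subseteq N[S]=X$, we get $\mathrm{diadem}(G)\subseteq X$, and hence $\mathrm{corona}(G)=\mathrm{diadem}(G)\subseteq X$.

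The next step localizes the independence number inside $X$. Any $W\in\Omega(G)$ lies in $\mathrm{corona}(G)\subseteq X$, so $W$ is an independent set of $G[X]$ of size $\alpha(G)$; since $\alpha(G[X])\leq\alpha(G)$ trivially, this forces $\alpha(G[X])=\alpha(G)$ (and in fact $\Omega(G)=\Omega(G[X])$). It therefore suffices to compute $\alpha(G[X])$ in a second, independent way and to show it equals $\left\vert S_{0}\right\vert $.

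The heart of the argument is to show that $S_{0}$ is already a maximum independent set of $G[X]$. Since $S_{0}$ is independent, $V(G[X])=S_{0}\cup N(S_{0})$ is a disjoint union, so it is enough to produce a matching of $G[X]$ saturating $N(S_{0})$ into $S_{0}$: then $\mu(G[X])\geq\left\vert N(S_{0})\right\vert $, and $G[X]$ being K\"{o}nig-Egerv\'{a}ry gives $\alpha(G[X])=\left\vert X\right\vert -\mu(G[X])\leq\left\vert S_{0}\right\vert $, with the reverse inequality clear. Such a matching exists by Hall's condition: if some $T\subseteq N(S_{0})$ had $\left\vert N(T)\cap S_{0}\right\vert <\left\vert T\right\vert $, then $S^{\prime}=S_{0}\setminus N(T)$ would be an independent set with $N(S^{\prime})\subseteq N(S_{0})\setminus T$, whence
\[
d(S^{\prime})=\left\vert S^{\prime}\right\vert -\left\vert N(S^{\prime})\right\vert >\left(  \left\vert S_{0}\right\vert -\left\vert T\right\vert \right)  -\left(  \left\vert N(S_{0})\right\vert -\left\vert T\right\vert \right)  =d(S_{0})=id(G),
\]
contradicting the maximality of $id(G)$ over independent sets.

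Combining the two computations yields $\alpha(G)=\alpha(G[X])=\left\vert S_{0}\right\vert =\alpha^{\prime}(G)$, so the critical independent set $S_{0}$ is in fact a maximum independent set of $G$; by Theorem \ref{th5}(ii), $G$ is a K\"{o}nig-Egerv\'{a}ry graph. The main obstacle is precisely the third step, namely $\alpha(G[N[S_{0}]])=\left\vert S_{0}\right\vert $; everything else is bookkeeping with the cited structural results. This is really a statement about critical independent sets (a critical independent set is a maximum independent set of the subgraph induced by its closed neighborhood), and it could alternatively be quoted from the theory surrounding Theorem \ref{th2}, but the Hall-type argument above keeps the proof self-contained. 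One should also dispatch the degenerate cases $id(G)\leq 0$ or $S_{0}=\emptyset$ separately, but these reduce at once to $\alpha(G)=0$ or to a neighborhood-saturating matching, and cause no difficulty.
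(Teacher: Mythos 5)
Your proof is correct, but it follows a genuinely different route from the paper's. The paper obtains this statement as a corollary of Theorem \ref{th14}, applied with $\Gamma^{\prime}=\mathrm{MaxCritIndep}(G)$ and $\Gamma=\Omega(G)$; that proof runs through the critical-set calculus and the collection machinery: $\mathrm{diadem}(G)$ is critical by Theorem \ref{th4}\emph{(ii)}, hence so is $\mathrm{corona}(G)$, hence $\mathrm{core}(G)$ is critical by Theorem \ref{Prop1}, hence $\bigcap\Omega(G)=\bigcap\mathrm{MaxCritIndep}(G)$ by Theorem \ref{th15}; a chain of cardinality inequalities (Theorem \ref{th1} together with Theorem \ref{th2}\emph{(ii)}) then shows that $\mathrm{MaxCritIndep}(G)$ is a K\"{o}nig-Egerv\'{a}ry collection, and Theorem \ref{th12} concludes. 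You instead read the hypothesis as a containment rather than a cardinality identity: $\mathrm{corona}(G)=\mathrm{diadem}(G)\subseteq N[S_{0}]=X$ traps every maximum independent set of $G$ inside $X$, so $\alpha(G[X])=\alpha(G)$; since also $\alpha(G[X])=\left\vert S_{0}\right\vert =\alpha^{\prime}(G)$, the set $S_{0}$ is a critical maximum independent set of $G$, and Theorem \ref{th5}\emph{(ii)} finishes. Note that your ``heart of the argument'' is exactly Proposition \ref{Prop2} (a critical independent set is maximum in the subgraph induced by its closed neighborhood), which the paper itself invokes inside Lemma \ref{Lem1}; your Hall-type verification is a correct, self-contained re-proof of it, and all your remaining steps are sound. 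What your route buys is economy: apart from Larson's decomposition (Theorem \ref{th2}) and Theorem \ref{th5}, you need none of Theorems \ref{th1}, \ref{Prop1}, \ref{th12}, \ref{th14}, \ref{th15}, nor the preorder $\vartriangleleft$. What the paper's route buys is generality: Theorem \ref{th14} handles arbitrary subfamilies $\Gamma^{\prime}\subseteq\mathrm{MaxCritIndep}(G)$ and $\Gamma\subseteq\Omega(G)$ satisfying the domination and union conditions, of which the present statement is only the extreme case $\Gamma^{\prime}=\mathrm{MaxCritIndep}(G)$, $\Gamma=\Omega(G)$, and along the way it yields structural byproducts such as the equality of the two intersections.
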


\begin{conjecture}
\label{conj3} \cite{JarLevMan2015b} $\left\vert \mathrm{\ker}(G)\right\vert
+\left\vert \mathrm{diadem}(G)\right\vert \leq2\alpha(G)$ for every graph $G$.
\end{conjecture}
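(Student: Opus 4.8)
The plan is to turn the desired inequality into an exact identity on a König-Egerv\'{a}ry induced subgraph and then squeeze. First I would fix an arbitrary $S_{0}\in\mathrm{MaxCritIndep}(G)$ and put $X=N[S_{0}]$. By Theorem \ref{th2}, $X=N[S]$ for \emph{every} $S\in\mathrm{MaxCritIndep}(G)$, and $G[X]$ is a König-Egerv\'{a}ry graph. Since each maximum critical independent set obeys $S\subseteq N[S]=X$, the entire family $\mathrm{MaxCritIndep}(G)$ consists of independent sets of $G[X]$; consequently $\mathrm{nucleus}(G)=\bigcap\mathrm{MaxCritIndep}(G)$ and $\mathrm{diadem}(G)=\bigcup\mathrm{MaxCritIndep}(G)$ are the intersection and the union of a nonempty subfamily of $\mathrm{Ind}(G[X])$, and may be computed inside $G[X]$.

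The decisive step is to prove that every $S\in\mathrm{MaxCritIndep}(G)$ is actually a \emph{maximum} independent set of $G[X]$, i.e. $\alpha(G[X])=\alpha^{\prime}(G)$; I expect this to be the main obstacle. My intended argument uses the König-Egerv\'{a}ry identity $\alpha(G[X])+\mu(G[X])=|X|=|S|+|N(S)|$ (the last equality since $S\cap N(S)=\emptyset$), together with the bound $\mu(G[X])\geq|N(S)|$. The latter amounts to producing a matching that saturates $N(S)$ into $S$, whose existence I would derive from Hall's condition in the bipartite graph between $S$ and $N(S)$: if some $W\subseteq N(S)$ had a set $S_{W}$ of fewer than $|W|$ neighbours in $S$, then $S^{\prime}=S\setminus S_{W}$ would satisfy $N(S^{\prime})\subseteq N(S)\setminus W$, whence $d(S^{\prime})\geq d(S)+(|W|-|S_{W}|)>d(S)$, contradicting the criticality of $S$. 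Granting $\mu(G[X])\geq|N(S)|$ yields $\alpha(G[X])\leq|S|$, and since $S$ itself is independent in $G[X]$ we conclude $\alpha(G[X])=|S|=\alpha^{\prime}(G)$; hence $\mathrm{MaxCritIndep}(G)\subseteq\Omega(G[X])$.

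Now I would apply Theorem \ref{th1}(iii) to the König-Egerv\'{a}ry graph $G[X]$ with the nonempty collection $\Gamma=\mathrm{MaxCritIndep}(G)\subseteq\Omega(G[X])$, obtaining the identity $|\mathrm{nucleus}(G)|+|\mathrm{diadem}(G)|=|\bigcap\Gamma|+|\bigcup\Gamma|=2\alpha(G[X])=2\alpha^{\prime}(G)$. Finally I close with two monotonicities already at hand: the inclusion $\mathrm{\ker}(G)\subseteq\mathrm{nucleus}(G)$ recorded in the excerpt gives $|\mathrm{\ker}(G)|\leq|\mathrm{nucleus}(G)|$, while Corollary \ref{th3} shows every critical independent set extends to a maximum independent set, so $\alpha^{\prime}(G)\leq\alpha(G)$. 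Chaining these, $|\mathrm{\ker}(G)|+|\mathrm{diadem}(G)|\leq|\mathrm{nucleus}(G)|+|\mathrm{diadem}(G)|=2\alpha^{\prime}(G)\leq2\alpha(G)$, as required. The only genuinely nontrivial ingredient is the identity $\alpha(G[X])=\alpha^{\prime}(G)$; everything else is bookkeeping with Theorem \ref{th1} and the inclusions relating $\mathrm{\ker}(G)$, $\mathrm{nucleus}(G)$, and $\mathrm{diadem}(G)$.
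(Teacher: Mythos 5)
Your proof is correct, and its skeleton coincides with the paper's: the paper also works inside $G[X]$ for $X=N[S]$, shows $\mathrm{MaxCritIndep}(G)\subseteq\Omega\left(G[X]\right)$ (Lemma \ref{Lem1}), invokes Theorem \ref{th2}\emph{(ii)} and Theorem \ref{th1} to bound $\left\vert \mathrm{nucleus}(G)\right\vert +\left\vert \mathrm{diadem}(G)\right\vert$ by $2\alpha^{\prime}(G)\leq2\alpha(G)$ (Theorem \ref{th13} and Corollary \ref{corollary 8}), and finishes with $\mathrm{\ker}(G)\subseteq\mathrm{nucleus}(G)$. The one real difference is how the decisive identity $\alpha\left(G[X]\right)=\alpha^{\prime}(G)$ is obtained: the paper simply cites Proposition \ref{Prop2} (every critical independent set is a local maximum independent set, hence maximum in $G[N[S]]$), whereas you re-derive it from scratch via Hall's condition --- your observation that a violating set $W\subseteq N(S)$ would yield an independent set $S^{\prime}=S\setminus S_{W}$ with $d(S^{\prime})>d(S)$ is exactly the matching-theoretic content hiding behind Proposition \ref{Prop2}, here combined with the K\"{o}nig-Egerv\'{a}ry identity for $G[X]$. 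So your argument is more self-contained (it does not need \cite{LevMan2012b}), at the cost of re-proving a known lemma; it also records the slightly sharper equality $\left\vert \mathrm{nucleus}(G)\right\vert +\left\vert \mathrm{diadem}(G)\right\vert =2\alpha^{\prime}(G)$, which the paper's statement of Theorem \ref{th13} leaves as an inequality, although its proof yields the same equality.
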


In this paper we involve these findings in a more general framework, where
they appear as corollaries.

\section{Results}

\begin{lemma}
\label{Lem1}If $S\in\mathrm{MaxCritIndep}(G)$ and $X=N[S]$, then
$\mathrm{MaxCritIndep}(G)\vartriangleleft\Omega\left(  G[X]\right)  $.
\end{lemma}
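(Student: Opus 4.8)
The plan is to reduce the claimed preorder relation to a single set‑theoretic inclusion of families. Recall that $\Gamma'\vartriangleleft\Gamma$ demands $\bigcup\Gamma'\subseteq\bigcup\Gamma$ together with $\bigcap\Gamma\subseteq\bigcap\Gamma'$. Both of these are automatic once one knows that $\mathrm{MaxCritIndep}(G)$ is, as a family of sets, contained in $\Omega(G[X])$: enlarging a family can only shrink its intersection and grow its union. Thus the entire lemma would follow from the assertion that every maximum critical independent set $S$ of $G$ is a \emph{maximum} independent set of the induced subgraph $G[X]$, i.e.\ $S\in\Omega(G[X])$. Here Theorem \ref{th2}\emph{(i)} is exactly what lets us speak of a single graph $G[X]$: the set $X=N[S]$ is the same for every $S\in\mathrm{MaxCritIndep}(G)$.

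So I would fix $S\in\mathrm{MaxCritIndep}(G)$ and write $X=N[S]=S\cup N(S)$, a disjoint union since $S$ is independent. Certainly $S$ is an independent set of $G[X]$, so $|S|\le\alpha(G[X])$; the task is the reverse inequality. Because $S$ is independent, no edge of $G[X]$ has both endpoints in $S$, so every edge of $G[X]$ meets $N(S)$; hence $N(S)$ is a vertex cover of $G[X]$ and, since the edges of any matching are pairwise disjoint, $\mu(G[X])\le|N(S)|$. The heart of the argument is to produce a matching of the same size, namely one saturating $N(S)$.

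To build such a matching I would apply Hall's/K\"onig's theorem to the bipartite graph on parts $N(S)$ and $S$ formed by the $S$--$N(S)$ edges, seeking a matching that saturates $N(S)$. If Hall's condition failed there would be $W\subseteq N(S)$ with $|N(W)\cap S|<|W|$; deleting $A:=N(W)\cap S$ from $S$ leaves an independent set $S'=S\setminus A$ whose neighborhood omits all of $W$ (every $S$-neighbor of a vertex of $W$ lies in $A$), so $|N(S')|\le|N(S)|-|W|$ and therefore $d(S')\ge d(S)+|W|-|A|>d(S)$, contradicting $d(S)=id(G)$. Hence $N(S)$ is saturated, $\mu(G[X])=|N(S)|$, and since $G[X]$ is K\"onig-Egerv\'ary by Theorem \ref{th2}\emph{(ii)} we get $\alpha(G[X])=|X|-\mu(G[X])=(|S|+|N(S)|)-|N(S)|=|S|$. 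Thus $S\in\Omega(G[X])$, which completes the reduction and the lemma.

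I expect the matching/deficiency step of the third paragraph to be the only real obstacle; the rest is bookkeeping. The delicate point there is checking that removing $A$ from $S$ strips exactly the vertices of $W$ (and possibly more) from the neighborhood, so that the difference $d$ strictly increases — it is precisely the maximality of $d(S)$ among independent sets (criticality) that is contradicted, and one should note that $N(S')\subseteq N(S)\subseteq X$ so that $d$ is computed correctly inside the full graph $G$. Everything else — the subfamily inclusion forcing $\vartriangleleft$, the bound $\mu(G[X])\le|N(S)|$, and the final arithmetic via K\"onig-Egerv\'ary — is routine.
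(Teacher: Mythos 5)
Your proof is correct, and its overall skeleton coincides with the paper's: both arguments reduce the lemma to the family inclusion $\mathrm{MaxCritIndep}(G)\subseteq\Omega\left(G[X]\right)$ (which, as you note, immediately yields the relation $\vartriangleleft$), and both invoke Theorem \ref{th2}\emph{(i)} so that $N[A]$ is the same set $X$ for every $A\in\mathrm{MaxCritIndep}(G)$. The difference is in how $\alpha\left(G[X]\right)=\left\vert S\right\vert$ is established. The paper obtains it in one line by citing Proposition \ref{Prop2}: a critical independent set is a local maximum independent set, i.e., $S\in\Omega\left(G[N[S]]\right)$, which is exactly the fact needed. You instead reprove that proposition from first principles: your Hall-condition argument (a violating set $W\subseteq N(S)$ would let you delete $N(W)\cap S$ from $S$ and strictly increase the difference $d$, contradicting criticality) produces a matching saturating $N(S)$, which together with the observation that $N(S)$ is a vertex cover of $G[X]$ and with Theorem \ref{th2}\emph{(ii)} gives $\alpha\left(G[X]\right)=\left\vert X\right\vert-\mu\left(G[X]\right)=\left\vert S\right\vert$. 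This deficiency argument is sound (including the check that $d$ is computed in $G$), so the proof stands. What your route buys is self-containedness: it does not rely on the external Proposition \ref{Prop2}. The cost is length and an extra appeal to Theorem \ref{th2}\emph{(ii)}, which the paper's proof never uses; in fact you could avoid it too, since a matching and a vertex cover of the same size $\left\vert N(S)\right\vert$ already force $\tau\left(G[X]\right)=\mu\left(G[X]\right)=\left\vert N(S)\right\vert$, and then $\alpha\left(G[X]\right)=\left\vert X\right\vert-\tau\left(G[X]\right)=\left\vert S\right\vert$ by the Gallai identity, making your argument independent of both cited ingredients except Theorem \ref{th2}\emph{(i)}.
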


\begin{proof}
By Proposition \ref{Prop2}, we get that $\alpha\left(  G[X]\right)
=\left\vert S\right\vert $. Since, in accordance with Theorem \ref{th2}%
\emph{(i)}, $X=N[A]$ for each $A\in\mathrm{MaxCritIndep}(G)$, we may conclude
that $\mathrm{MaxCritIndep}(G)\subseteq\Omega\left(  G[X]\right)  $. Hence,
$\mathrm{MaxCritIndep}(G)\vartriangleleft\Omega\left(  G[X]\right)  $.
\end{proof}

There is a graph $G$ with $\mathrm{MaxCritIndep}(G)\subsetneq\Omega\left(
G[X]\right)  $, $S\in\mathrm{MaxCritIndep}(G)$, and $X=N[S]$. For instance,
the graph $G$ from Figure \ref{fig43} has
\[
\mathrm{MaxCritIndep}(G)=\left\{  \left\{  a,b,c,d,e,g\right\}  ,\left\{
a,b,c,d,f,g\right\}  \right\}  ,X=N[\left\{  a,b,c,d,e,g\right\}  ],
\]
while $\left\{  a,b,c,d,e,k\right\}  \in\Omega\left(  G[X]\right)
-\mathrm{MaxCritIndep}(G)$.

\begin{figure}[h]
\setlength{\unitlength}{1cm}\begin{picture}(5,1)\thicklines
\multiput(4,0)(1,0){7}{\circle*{0.29}}
\multiput(4,1)(1,0){6}{\circle*{0.29}}
\put(4,0){\line(1,0){6}}
\put(4,1){\line(1,-1){1}}
\put(5,0){\line(0,1){1}}
\put(6,1){\line(1,-1){1}}
\put(6,1){\line(1,0){1}}
\put(7,0){\line(0,1){1}}
\put(8,0){\line(0,1){1}}
\put(9,1){\line(1,-1){1}}
\put(9,0){\line(0,1){1}}
\put(3.7,1){\makebox(0,0){$a$}}
\put(3.7,0){\makebox(0,0){$b$}}
\put(4.7,1){\makebox(0,0){$c$}}
\put(5.7,1){\makebox(0,0){$e$}}
\put(7.3,1){\makebox(0,0){$f$}}
\put(8.3,1){\makebox(0,0){$g$}}
\put(8.3,0.35){\makebox(0,0){$k$}}
\put(6,0.35){\makebox(0,0){$d$}}
\put(2.8,0.5){\makebox(0,0){$G$}}
\end{picture}\caption{$d(\{a,b,c,d,e,k\})=1<2=d(G)$.}%
\label{fig43}%
\end{figure}
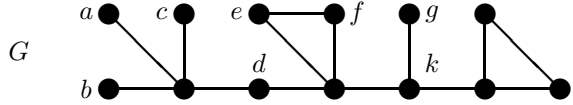

\begin{corollary}
\cite{Short2015} If $S\in\mathrm{MaxCritIndep}(G)$ and $X=N[S]$, then
\[
\mathrm{diadem}(G)\subseteq\mathrm{diadem}(G[X])\text{ and }\mathrm{nucleus}%
(G[X])\subseteq\mathrm{nucleus}(G).
\]

\end{corollary}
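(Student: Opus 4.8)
The plan is to leverage Lemma~\ref{Lem1} together with the König-Egerváry structure of $G[X]$ supplied by Theorem~\ref{th2}. The essential observation is that both asserted containments collapse to a single inclusion between the \emph{families} $\mathrm{MaxCritIndep}(G)$ and $\mathrm{MaxCritIndep}(G[X])$; once that inclusion is in hand, the set-theoretic monotonicity of $\bigcup$ and $\bigcap$ finishes everything.

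First I would extract from the proof of Lemma~\ref{Lem1} the sharper fact that $\mathrm{MaxCritIndep}(G)\subseteq\Omega(G[X])$: every maximum critical independent set of $G$ is a maximum independent set of the induced subgraph $G[X]$. Each such set $A$ is contained in $X=N[A]$ by Theorem~\ref{th2}\emph{(i)}, so it is a legitimate vertex subset of $G[X]$ and the inclusion makes sense.

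Second---and this is the step requiring care---I would identify $\Omega(G[X])$ with $\mathrm{MaxCritIndep}(G[X])$. By Theorem~\ref{th2}\emph{(ii)}, $G[X]$ is a König-Egerváry graph, so Theorem~\ref{th5}\emph{(iii)} guarantees that \emph{every} maximum independent set of $G[X]$ is critical. Since no independent set of $G[X]$ can exceed $\alpha(G[X])$ in cardinality, the critical independent sets of maximum size are exactly the maximum independent sets, whence $\Omega(G[X])=\mathrm{MaxCritIndep}(G[X])$. Combined with the first step, this yields the family inclusion $\mathrm{MaxCritIndep}(G)\subseteq\mathrm{MaxCritIndep}(G[X])$.

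Finally I would pass to unions and intersections. Because $\mathrm{MaxCritIndep}(G)\subseteq\mathrm{MaxCritIndep}(G[X])$, taking unions preserves the inclusion, giving $\mathrm{diadem}(G)=\bigcup\mathrm{MaxCritIndep}(G)\subseteq\bigcup\mathrm{MaxCritIndep}(G[X])=\mathrm{diadem}(G[X])$, while taking intersections reverses it, giving $\mathrm{nucleus}(G[X])=\bigcap\mathrm{MaxCritIndep}(G[X])\subseteq\bigcap\mathrm{MaxCritIndep}(G)=\mathrm{nucleus}(G)$. The only genuinely delicate point is the second step: one must confirm that moving to the induced subgraph upgrades ``maximum independent in $G[X]$'' to ``maximum \emph{critical} independent in $G[X]$,'' which is exactly what the König-Egerváry property of $G[X]$ delivers through Theorem~\ref{th5}.
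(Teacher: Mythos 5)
Your proof is correct and follows essentially the same route as the paper: Theorem~\ref{th2}\emph{(ii)} plus Theorem~\ref{th5}\emph{(iii)} to identify $\Omega(G[X])$ with $\mathrm{MaxCritIndep}(G[X])$, then Lemma~\ref{Lem1} to relate $\mathrm{MaxCritIndep}(G)$ to $\Omega(G[X])$, and finally passing to unions and intersections. The only cosmetic difference is that you invoke the family inclusion $\mathrm{MaxCritIndep}(G)\subseteq\Omega(G[X])$ from the proof of Lemma~\ref{Lem1} and unpack the monotonicity by hand, whereas the paper uses the relation $\vartriangleleft$ stated in the lemma, whose definition encodes exactly those two containments.
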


\begin{proof}
In accordance with Theorem \ref{th2}\emph{(ii)}, $G[X]$ is a
K\"{o}nig-Egerv\'{a}ry graph. Hence, Theorem \ref{th5}\emph{(iii)} implies
that $\mathrm{MaxCritIndep}(G[X])=\Omega\left(  G[X]\right)  $. Therefore,
Lemma \ref{Lem1} ensures that $\mathrm{MaxCritIndep}(G)\vartriangleleft
\mathrm{MaxCritIndep}(G[X])$, which, by definition, means $\mathrm{diadem}%
(G)\subseteq\mathrm{diadem}(G[X])$ and $\mathrm{nucleus}(G[X])\subseteq
\mathrm{nucleus}(G)$.
\end{proof}

\begin{theorem}
\label{th13}If $\emptyset\neq\Gamma^{\prime}\subseteq\mathrm{MaxCritIndep}(G)$
and $\emptyset\neq\Gamma\subseteq\Omega(G)$, then
\[
\left\vert
{\displaystyle\bigcap}
\Gamma^{\prime}\right\vert +\left\vert \bigcup\Gamma^{\prime}\right\vert
\leq2\alpha^{\prime}(G)\leq2\alpha(G)\leq\left\vert
{\displaystyle\bigcap}
\Gamma\right\vert +\left\vert \bigcup\Gamma\right\vert .
\]

\end{theorem}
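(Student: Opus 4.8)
The plan is to read the displayed statement as a chain of three inequalities and to handle them from the outside in, since the two rightmost ones are almost immediate from results already at hand. The rightmost inequality $2\alpha(G)\leq\left\vert \bigcap\Gamma\right\vert +\left\vert \bigcup\Gamma\right\vert$ is exactly Theorem \ref{th1}\emph{(ii)} applied to $\emptyset\neq\Gamma\subseteq\Omega(G)$, so nothing new is needed there. For the middle inequality $2\alpha^{\prime}(G)\leq2\alpha(G)$, recall that $\alpha^{\prime}(G)=|S|$ for any $S\in\mathrm{MaxCritIndep}(G)$, and that $S$, being critical, can be enlarged to a maximum independent set by Corollary \ref{th3}; hence $|S|\leq\alpha(G)$ and the inequality follows.

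The real content is the leftmost inequality $\left\vert \bigcap\Gamma^{\prime}\right\vert +\left\vert \bigcup\Gamma^{\prime}\right\vert \leq2\alpha^{\prime}(G)$, and my strategy is to transfer the problem to the auxiliary K\"{o}nig-Egerv\'{a}ry graph supplied by Theorem \ref{th2}. First I would fix some $S\in\mathrm{MaxCritIndep}(G)$ and set $X=N[S]$; by Theorem \ref{th2}\emph{(i)} this $X$ does not depend on the choice of $S$, and by Theorem \ref{th2}\emph{(ii)} the induced subgraph $G[X]$ is K\"{o}nig-Egerv\'{a}ry. The key observation, already recorded inside the proof of Lemma \ref{Lem1}, is that $\mathrm{MaxCritIndep}(G)\subseteq\Omega\left(  G[X]\right)$ with $\alpha\left(  G[X]\right)  =|S|=\alpha^{\prime}(G)$. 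Consequently $\emptyset\neq\Gamma^{\prime}\subseteq\Omega\left(  G[X]\right)$, so $\Gamma^{\prime}$ is a nonempty subfamily of the maximum independent sets of a K\"{o}nig-Egerv\'{a}ry graph.

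With this reduction in place I would simply invoke Theorem \ref{th1}\emph{(iii)} for the graph $G[X]$ and the family $\Gamma^{\prime}$. Because the union and intersection of $\Gamma^{\prime}$ are the same sets whether viewed inside $G$ or inside $G[X]$, that theorem yields $\left\vert \bigcap\Gamma^{\prime}\right\vert +\left\vert \bigcup\Gamma^{\prime}\right\vert =2\alpha\left(  G[X]\right)  =2\alpha^{\prime}(G)$, which is even stronger than the claimed inequality. Chaining the three parts then gives the full display. I expect the only real obstacle to be recognizing that the left-hand inequality should be proved not in $G$ but in the K\"{o}nig-Egerv\'{a}ry graph $G[X]$, where equality in Theorem \ref{th1}\emph{(iii)} is available; once $G[X]$ is brought into play the estimate is forced, and no delicate counting is required.
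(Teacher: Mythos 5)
Your proof is correct and follows essentially the same route as the paper: both arguments obtain the outer inequalities from Theorem \ref{th1}\emph{(ii)} and Corollary \ref{th3}, and both reduce the left-hand estimate to the K\"{o}nig-Egerv\'{a}ry graph $G[X]$, $X=N[S]$, furnished by Theorem \ref{th2}, using the fact (from Lemma \ref{Lem1}) that the members of $\Gamma^{\prime}$ are maximum independent sets of $G[X]$ of size $\alpha^{\prime}(G)$. The only difference is minor: the paper passes through the preorder $\Gamma^{\prime}\vartriangleleft\Omega\left(G[X]\right)$ and Theorem \ref{th1}\emph{(i)} to bound the sum by $\left\vert \mathrm{core}(G[X])\right\vert +\left\vert \mathrm{corona}(G[X])\right\vert$, whereas you apply Theorem \ref{th1}\emph{(iii)} directly to the subfamily $\Gamma^{\prime}\subseteq\Omega\left(G[X]\right)$, which even yields the slightly stronger conclusion $\left\vert \bigcap\Gamma^{\prime}\right\vert +\left\vert \bigcup\Gamma^{\prime}\right\vert =2\alpha^{\prime}(G)$.
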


\begin{proof}
Let $S\in\mathrm{MaxCritIndep}(G)$ and $X=N[S]$. Since $\Gamma^{\prime
}\subseteq\mathrm{MaxCritIndep}(G)$, and, by Lemma \ref{Lem1},
$\mathrm{MaxCritIndep}(G)\vartriangleleft\Omega\left(  G[X]\right)  $, we get
$\Gamma^{\prime}\vartriangleleft\Omega\left(  G[X]\right)  $. According to
Theorem \ref{th2}\emph{(ii)}, $G[X]$ is a K\"{o}nig-Egerv\'{a}ry graph. Now,
using Theorem \ref{th1}\emph{(ii), (iii)}, we obtain
\begin{gather*}
\left\vert
{\displaystyle\bigcap}
\Gamma^{\prime}\right\vert +\left\vert \bigcup\Gamma^{\prime}\right\vert
\leq\left\vert \mathrm{core}(G[X])\right\vert +\left\vert \mathrm{corona}%
(G[X])\right\vert \\
=2\alpha(G[X])=2\alpha^{\prime}(G)\leq2\alpha(G)\leq\left\vert
{\displaystyle\bigcap}
\Gamma\right\vert +\left\vert \bigcup\Gamma\right\vert ,
\end{gather*}
as claimed.
\end{proof}

If $\Gamma^{\prime}=\mathrm{MaxCritIndep}(G)$ and $\Gamma=\Omega(G)$, Theorem
\ref{th13} immediately implies the following.

\begin{corollary}
\label{corollary 8}\cite{Short2015} $\left\vert \mathrm{nucleus}(G)\right\vert
+\left\vert \mathrm{diadem}(G)\right\vert \leq2\alpha\left(  G\right)  $ for
every graph $G$.
\end{corollary}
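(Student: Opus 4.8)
The plan is to obtain the inequality as a direct specialization of Theorem \ref{th13}, choosing the two collections so that their intersection and union reproduce the nucleus and the diadem. First I would set $\Gamma^{\prime}=\mathrm{MaxCritIndep}(G)$ and $\Gamma=\Omega(G)$. Both collections are nonempty: $\Omega(G)$ always contains a maximum independent set, while $\mathrm{MaxCritIndep}(G)\neq\emptyset$ because the finite maximum defining $id(G)$ is attained by some independent set, and among all critical independent sets one of maximum cardinality exists. Hence the hypotheses $\emptyset\neq\Gamma^{\prime}\subseteq\mathrm{MaxCritIndep}(G)$ and $\emptyset\neq\Gamma\subseteq\Omega(G)$ of Theorem \ref{th13} are met.

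Next I would unwind the definitions recalled in the introduction to match the quantities appearing in Theorem \ref{th13} with the named sets. By definition, $\bigcap\Gamma^{\prime}=\bigcap\mathrm{MaxCritIndep}(G)=\mathrm{nucleus}(G)$ and $\bigcup\Gamma^{\prime}=\bigcup\mathrm{MaxCritIndep}(G)=\mathrm{diadem}(G)$. With these identifications the left-hand portion of the chain in Theorem \ref{th13} reads exactly
\[
\left\vert\mathrm{nucleus}(G)\right\vert+\left\vert\mathrm{diadem}(G)\right\vert\leq2\alpha^{\prime}(G)\leq2\alpha(G),
\]
and discarding the sharper intermediate bound through $\alpha^{\prime}(G)$ yields the claimed inequality $\left\vert\mathrm{nucleus}(G)\right\vert+\left\vert\mathrm{diadem}(G)\right\vert\leq2\alpha(G)$.

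Since the entire argument is a substitution into an already-established result, there is no genuine obstacle here: all the analytic content has been absorbed into Theorem \ref{th13}, which in turn rested on Lemma \ref{Lem1} (giving $\mathrm{MaxCritIndep}(G)\vartriangleleft\Omega(G[X])$) and on Theorem \ref{th2}\emph{(ii)} (that $G[X]$ is a K\"{o}nig-Egerv\'{a}ry graph). The only points demanding any care are confirming the nonemptiness of $\mathrm{MaxCritIndep}(G)$ and verifying that the quantities $\bigcap\Gamma^{\prime}$ and $\bigcup\Gamma^{\prime}$ coincide with $\mathrm{nucleus}(G)$ and $\mathrm{diadem}(G)$ respectively; both are immediate from the definitions, so the corollary follows at once.
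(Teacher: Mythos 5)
Your proposal is correct and is exactly the paper's own argument: the paper derives this corollary by taking $\Gamma^{\prime}=\mathrm{MaxCritIndep}(G)$ and $\Gamma=\Omega(G)$ in Theorem \ref{th13}, so that $\bigcap\Gamma^{\prime}=\mathrm{nucleus}(G)$ and $\bigcup\Gamma^{\prime}=\mathrm{diadem}(G)$, just as you do. Your additional checks (nonemptiness of $\mathrm{MaxCritIndep}(G)$ and the definitional identifications) are sound and merely make explicit what the paper leaves implicit.
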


Since $\mathrm{\ker}(G)\subseteq\mathrm{nucleus}(G)$, Corollary
\ref{corollary 8} validates Conjecture \ref{conj3}.

Let us recall that a family of independent sets $\Gamma$ is a
\textit{K\"{o}nig-Egerv\'{a}ry collection} if
\[
\left\vert
{\displaystyle\bigcap}
\Gamma\right\vert +\left\vert
{\displaystyle\bigcup}
\Gamma\right\vert =2\alpha(G)\text{ \cite{JarLevMan2015a}}.
\]
If there exists a K\"{o}nig-Egerv\'{a}ry collection $\Gamma\subseteq\Omega
(G)$, this does not oblige $G$ to be a K\"{o}nig-Egerv\'{a}ry graph. For
instance, the graph $G$ from Figure \ref{fig111}\ satisfies $\left\vert
\mathrm{corona}(G)\right\vert +\left\vert \mathrm{core}(G)\right\vert
=2\alpha(G)$, i.e., $\Omega(G)$ is a K\"{o}nig-Egerv\'{a}ry collection, while
$G$ is not a K\"{o}nig-Egerv\'{a}ry graph.\begin{figure}[h]
\setlength{\unitlength}{1cm}\begin{picture}(5,1.7)\thicklines
\multiput(4,0.5)(1,0){5}{\circle*{0.29}}
\multiput(4,1.5)(2,0){2}{\circle*{0.29}}
\put(7,1.5){\circle*{0.29}}
\put(8,1.5){\circle*{0.29}}
\put(4,0.5){\line(1,0){4}}
\put(4,0.5){\line(0,1){1}}
\put(4,1.5){\line(1,-1){1}}
\put(4,1.5){\line(2,-1){2}}
\put(4,0.5){\line(2,1){2}}
\put(5,0.5){\line(1,1){1}}
\put(6,0.5){\line(0,1){1}}
\put(7,0.5){\line(0,1){1}}
\put(8,0.5){\line(0,1){1}}
\qbezier(4,0.5)(5,-0.3)(6,0.5)
\put(3.7,1.5){\makebox(0,0){$a$}}
\put(6.3,1.5){\makebox(0,0){$b$}}
\put(7.3,1.5){\makebox(0,0){$c$}}
\put(7.3,0.68){\makebox(0,0){$d$}}
\put(8.3,1.5){\makebox(0,0){$e$}}
\put(8.3,0.5){\makebox(0,0){$f$}}
\put(3,1){\makebox(0,0){$G$}}
\end{picture}\caption{$\mathrm{core}(G)=\left\{  a,b\right\}  $ and
$\mathrm{corona}(G)=\left\{  a,b,c,d,e,f\right\}  $.}%
\label{fig111}%
\end{figure}

\begin{theorem}
\label{th12}For a graph $G$, the following assertions are equivalent:

\emph{(i)} $G$ is a K\"{o}nig-Egerv\'{a}ry graph;

\emph{(ii)} every non-empty family of maximum critical independent sets of $G$
is a K\"{o}nig-Egerv\'{a}ry collection;

\emph{(iii)} there is a K\"{o}nig-Egerv\'{a}ry collection of maximum critical
independent sets of $G$.
\end{theorem}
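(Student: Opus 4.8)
The plan is to route everything through the single invariant $\alpha^{\prime}(G)$ together with the inequality chain of Theorem \ref{th13}. The first thing I would isolate is the bridging equivalence
\[
G\text{ is K\"{o}nig-Egerv\'{a}ry}\iff\alpha^{\prime}(G)=\alpha(G).
\]
Indeed, by Corollary \ref{th3} every critical independent set embeds in a maximum independent set, so $\alpha^{\prime}(G)\leq\alpha(G)$ holds unconditionally; equality means there is a critical independent set of cardinality $\alpha(G)$, i.e.\ a maximum independent set that is critical, which by Theorem \ref{th5}\emph{(ii)} is exactly the defining condition for $G$ to be K\"{o}nig-Egerv\'{a}ry. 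This equivalence is the device that converts the collection-theoretic hypotheses in \emph{(ii)} and \emph{(iii)} into the graph property \emph{(i)}.

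With this in hand, \emph{(iii)}$\Rightarrow$\emph{(i)} --- the step I expect to be the crux --- reduces to a one-line squeeze. Let $\Gamma^{\prime}\subseteq\mathrm{MaxCritIndep}(G)$ be a K\"{o}nig-Egerv\'{a}ry collection, so that $|\bigcap\Gamma^{\prime}|+|\bigcup\Gamma^{\prime}|=2\alpha(G)$. Feeding $\Gamma^{\prime}$ into the left half of Theorem \ref{th13} yields
\[
2\alpha(G)=\left\vert\bigcap\Gamma^{\prime}\right\vert+\left\vert\bigcup\Gamma^{\prime}\right\vert\leq2\alpha^{\prime}(G)\leq2\alpha(G),
\]
forcing $\alpha^{\prime}(G)=\alpha(G)$, whence $G$ is K\"{o}nig-Egerv\'{a}ry by the bridging equivalence. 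The entire difficulty is thereby absorbed into Theorem \ref{th13}, whose proof already exploits that $G[N[S]]$ is K\"{o}nig-Egerv\'{a}ry (Theorem \ref{th2}); no further combinatorial work is needed at this point.

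For \emph{(i)}$\Rightarrow$\emph{(ii)} I would argue as follows. If $G$ is K\"{o}nig-Egerv\'{a}ry then $\alpha^{\prime}(G)=\alpha(G)$, so every maximum critical independent set has size $\alpha(G)$ and hence $\mathrm{MaxCritIndep}(G)\subseteq\Omega(G)$. Take any $\emptyset\neq\Gamma^{\prime}\subseteq\mathrm{MaxCritIndep}(G)$. The upper half of Theorem \ref{th13} gives $|\bigcap\Gamma^{\prime}|+|\bigcup\Gamma^{\prime}|\leq2\alpha^{\prime}(G)=2\alpha(G)$, while Theorem \ref{th1}\emph{(ii)}, applicable precisely because $\Gamma^{\prime}\subseteq\Omega(G)$, supplies the reverse bound $2\alpha(G)\leq|\bigcap\Gamma^{\prime}|+|\bigcup\Gamma^{\prime}|$. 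Equality then says exactly that $\Gamma^{\prime}$ is a K\"{o}nig-Egerv\'{a}ry collection.

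Finally, \emph{(ii)}$\Rightarrow$\emph{(iii)} is immediate once one observes that $\mathrm{MaxCritIndep}(G)$ is never empty: the independent sets maximizing $d$ always exist, and those of largest cardinality among them form a non-empty family, so any singleton $\{S\}$ with $S\in\mathrm{MaxCritIndep}(G)$ already witnesses \emph{(iii)}. The only genuine subtlety in the whole argument is bookkeeping: one must apply the two halves of Theorem \ref{th13} to the correct collections --- the left half to $\Gamma^{\prime}\subseteq\mathrm{MaxCritIndep}(G)$ and the right half to a subfamily of $\Omega(G)$ --- and verify that $\mathrm{MaxCritIndep}(G)\subseteq\Omega(G)$ holds exactly when $G$ is K\"{o}nig-Egerv\'{a}ry, so that both bounds may be invoked together in \emph{(i)}$\Rightarrow$\emph{(ii)}.
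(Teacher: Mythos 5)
Your proposal is correct and takes essentially the same approach as the paper: the crux \emph{(iii)}$\Rightarrow$\emph{(i)} is the same squeeze $2\alpha(G)=\left\vert \bigcap\Gamma^{\prime}\right\vert +\left\vert \bigcup\Gamma^{\prime}\right\vert \leq2\alpha^{\prime}(G)\leq2\alpha(G)$, which you get by citing Theorem \ref{th13} while the paper re-derives that very chain inline from Lemma \ref{Lem1} and Theorems \ref{th2} and \ref{th1} (note $\alpha(G[X])=\alpha^{\prime}(G)$), and your explicit bridge ``$G$ is K\"{o}nig-Egerv\'{a}ry if and only if $\alpha^{\prime}(G)=\alpha(G)$'' is exactly the step the paper compresses into ``$\alpha(G[X])=\alpha(G)$ ensures $G$ is K\"{o}nig-Egerv\'{a}ry.'' Likewise your \emph{(i)}$\Rightarrow$\emph{(ii)}, via $\mathrm{MaxCritIndep}(G)\subseteq\Omega(G)$ and the squeeze between Theorem \ref{th13} and Theorem \ref{th1}\emph{(ii)}, is only a cosmetic variant of the paper's argument ($\mathrm{MaxCritIndep}(G)=\Omega(G)$ by Theorem \ref{th5}, then Theorem \ref{th1}\emph{(iii)}), so no genuinely different ideas are involved.
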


\begin{proof}
\emph{(i) }$\Longrightarrow$ \emph{(ii) }By Theorem \ref{th5}, we obtain
$\mathrm{MaxCritIndep}(G)=\Omega\left(  G\right)  $. Further, in accordance
with Theorem \ref{th1}\emph{(iii)}, each $\Gamma^{\prime}\subseteq
\mathrm{MaxCritIndep}(G)$ is a K\"{o}nig-Egerv\'{a}ry collection.

\emph{(ii) }$\Longrightarrow$ \emph{(iii) }Clear.

\emph{(iii) }$\Longrightarrow$ \emph{(i) }Let\emph{ }$\Gamma^{\prime}%
\subseteq\mathrm{MaxCritIndep}(G)$ be a K\"{o}nig-Egerv\'{a}ry collection,
$S\in\Gamma^{\prime}$ and $X=N[S]$. Since, by Lemma \ref{Lem1},\emph{
}$\mathrm{MaxCritIndep}(G)\vartriangleleft\Omega\left(  G[X]\right)  $, we
arrive at the conclusion that $\Gamma^{\prime}\vartriangleleft\Omega\left(
G[X]\right)  $, and hence,
\[
\left\vert
{\displaystyle\bigcap}
\Gamma^{\prime}\right\vert +\left\vert
{\displaystyle\bigcup}
\Gamma^{\prime}\right\vert \leq\left\vert \mathrm{nucleus}(G[X])\right\vert
+\left\vert \mathrm{diadem}(G[X])\right\vert .
\]

According to Theorem \ref{th2}\emph{(ii)}, $G[X]$ is a K\"{o}nig-Egerv\'{a}ry
graph. Using Theorem \ref{th1}\emph{(iii)}, we infer that
\[
2\alpha(G)=\left\vert
{\displaystyle\bigcap}
\Gamma^{\prime}\right\vert +\left\vert \bigcup\Gamma^{\prime}\right\vert
\leq\left\vert \mathrm{nucleus}(G[X])\right\vert +\left\vert \mathrm{diadem}%
(G[X])\right\vert =2\alpha(G[X])\leq2\alpha(G).
\]
Consequently, we obtain $\alpha(G[X])=\alpha(G)$, which ensures that $G$ is a
K\"{o}nig-Egerv\'{a}ry graph.
\end{proof}

Since $\left\vert \mathrm{nucleus}(G)\right\vert +\left\vert \mathrm{diadem}%
(G)\right\vert =2\alpha(G)$ means that $\mathrm{MaxCritIndep}(G)$ is a
K\"{o}nig-Egerv\'{a}ry collection, Theorem \ref{th12} immediately implies the following.

\begin{corollary}
\label{corollary 7}\cite{Short2015} If $\left\vert \mathrm{nucleus}%
(G)\right\vert +\left\vert \mathrm{diadem}(G)\right\vert =2\alpha(G)$, then
$G$ is a K\"{o}nig-Egerv\'{a}ry graph, which validates Conjecture \ref{conj1}.
\end{corollary}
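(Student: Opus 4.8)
The plan is to read off Corollary \ref{corollary 7} as the instance $\Gamma^{\prime}=\mathrm{MaxCritIndep}(G)$ of the equivalence just established in Theorem \ref{th12}: no genuinely new argument is needed, only the observation that the hypothesis is a disguised form of condition \emph{(iii)}. First I would unpack the definitions $\mathrm{nucleus}(G)=\bigcap\mathrm{MaxCritIndep}(G)$ and $\mathrm{diadem}(G)=\bigcup\mathrm{MaxCritIndep}(G)$, so that the assumption $\left\vert\mathrm{nucleus}(G)\right\vert+\left\vert\mathrm{diadem}(G)\right\vert=2\alpha(G)$ reads exactly as $\left\vert\bigcap\mathrm{MaxCritIndep}(G)\right\vert+\left\vert\bigcup\mathrm{MaxCritIndep}(G)\right\vert=2\alpha(G)$, i.e. the statement that $\mathrm{MaxCritIndep}(G)$ is a K\"{o}nig-Egerv\'{a}ry collection.

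Next I would note that $\mathrm{MaxCritIndep}(G)$ is a non-empty family of maximum critical independent sets: critical independent sets always exist (the maximum of $d$ over $\mathrm{Ind}(G)$ is attained on a finite graph), and one may select one of maximum cardinality. Hence $\mathrm{MaxCritIndep}(G)$ witnesses condition \emph{(iii)} of Theorem \ref{th12}, namely that there is a K\"{o}nig-Egerv\'{a}ry collection of maximum critical independent sets of $G$. Applying the implication \emph{(iii)} $\Longrightarrow$ \emph{(i)} of Theorem \ref{th12} then yields that $G$ is a K\"{o}nig-Egerv\'{a}ry graph; since this is precisely the assertion of Conjecture \ref{conj1}, the corollary and the validation of that conjecture follow simultaneously.

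Since the corollary is a direct specialization of Theorem \ref{th12}, I expect no real obstacle at this stage; the only care needed is to keep all cardinalities and the parameter $\alpha(G)$ referring to $G$ itself, so that the hypothesis genuinely is the instance $\left\vert\bigcap\Gamma^{\prime}\right\vert+\left\vert\bigcup\Gamma^{\prime}\right\vert=2\alpha(G)$ of a K\"{o}nig-Egerv\'{a}ry collection with $\Gamma^{\prime}=\mathrm{MaxCritIndep}(G)$. The substantive difficulty lives entirely inside Theorem \ref{th12}, and were one to seek a proof bypassing it, the hard part would be exactly its \emph{(iii)} $\Longrightarrow$ \emph{(i)} direction: fixing $S\in\mathrm{MaxCritIndep}(G)$ and $X=N[S]$, using Theorem \ref{th2} to know that $G[X]$ is K\"{o}nig-Egerv\'{a}ry, invoking Lemma \ref{Lem1} to obtain $\mathrm{MaxCritIndep}(G)\vartriangleleft\Omega(G[X])$, and then running the sandwich of Theorem \ref{th1} to force $\alpha(G[X])=\alpha(G)=\alpha^{\prime}(G)$, so that $S$ becomes a maximum independent set that is critical and Theorem \ref{th5} finishes.
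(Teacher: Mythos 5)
Your proposal is correct and is essentially identical to the paper's own proof: the paper likewise observes that the hypothesis, after unpacking $\mathrm{nucleus}(G)=\bigcap\mathrm{MaxCritIndep}(G)$ and $\mathrm{diadem}(G)=\bigcup\mathrm{MaxCritIndep}(G)$, says exactly that $\mathrm{MaxCritIndep}(G)$ is a K\"{o}nig-Egerv\'{a}ry collection, and then applies the implication \emph{(iii)} $\Longrightarrow$ \emph{(i)} of Theorem \ref{th12}. Your additional remark that $\mathrm{MaxCritIndep}(G)\neq\emptyset$ (since critical independent sets exist in any finite graph) is a small but legitimate point of care that the paper leaves implicit.
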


If $\emptyset\neq\Gamma\subseteq\Omega(G)$, then none of $%
{\displaystyle\bigcap}
\Gamma$ and $%
{\displaystyle\bigcup}
\Gamma$ is necessarily critical. For instance, consider the graph $G$ from
Figure \ref{fig111}, and $\Gamma=\left\{  \left\{  a,b,c,e\right\}  ,\left\{
a,b,c,f\right\}  \right\}  \subseteq\Omega(G)$.

\begin{theorem}
\label{th15}Let $\Gamma\subseteq\Omega(G)$ and $\emptyset\neq\Gamma^{\prime
}\subseteq\mathrm{MaxCritIndep}(G)$, be such that for every $A\in
\Gamma^{\prime}$ there exists $S\in\Gamma$ enjoying $A\subseteq S$. If $%
{\displaystyle\bigcap}
\Gamma$ is a critical set, then the following assertions are true:

\emph{(i)} $%
{\displaystyle\bigcap}
\Gamma\subseteq%
{\displaystyle\bigcap}
\Gamma^{\prime}$;

\emph{(ii)} $\Gamma^{\prime}\vartriangleleft$ $\Gamma$;

\emph{(iii)} $\left\vert
{\displaystyle\bigcap}
\Gamma^{\prime}\right\vert +\left\vert
{\displaystyle\bigcup}
\Gamma^{\prime}\right\vert \leq\left\vert
{\displaystyle\bigcap}
\Gamma\right\vert +\left\vert
{\displaystyle\bigcup}
\Gamma\right\vert $;

\emph{(iv)} $%
{\displaystyle\bigcap}
\Gamma^{\prime}=%
{\displaystyle\bigcap}
\Gamma$, if, in addition, $%
{\displaystyle\bigcup}
\Gamma^{\prime}=%
{\displaystyle\bigcup}
\Gamma$.
\end{theorem}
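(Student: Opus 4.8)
The plan is to reduce everything to assertion (i), which is the heart of the matter, and then to read off (ii)--(iv) from the machinery already in place. Write $C = \bigcap \Gamma$ and $C' = \bigcap \Gamma'$. Since $\Gamma' \neq \emptyset$ and every member of $\Gamma'$ sits inside some member of $\Gamma$, the family $\Gamma$ is nonempty; hence $C$ is an intersection of maximum independent sets and so is itself independent, and by hypothesis it is also critical, i.e. a critical independent set.

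For (i) I would fix an arbitrary $A \in \Gamma'$ and compare it with $C$. Both $A$ and $C$ are critical independent sets, so Theorem~\ref{th4}(ii) guarantees that $A \cup C$ is again a critical independent set (in particular, independent---which is exactly the nontrivial content, since a union of independent sets need not be independent). Now $A \in \mathrm{MaxCritIndep}(G)$ is a \emph{maximum} critical independent set, so $|A| = \alpha'(G)$, while every critical independent set has cardinality at most $\alpha'(G)$; thus $|A \cup C| \leq \alpha'(G) = |A|$. On the other hand $A \subseteq A \cup C$ forces $|A| \leq |A \cup C|$, whence $A = A \cup C$ and therefore $C \subseteq A$. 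As $A \in \Gamma'$ was arbitrary, $C \subseteq \bigcap \Gamma' = C'$, which is (i). I expect this to be the main obstacle: it is the only place where the criticality of $\bigcap \Gamma$ is genuinely exploited, and the idea is to upgrade that hypothesis into the inclusion $C \subseteq A$ by playing the closure of critical independent sets under union against the maximality of $A$.

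Part (ii) then needs only the complementary inclusion $\bigcup \Gamma' \subseteq \bigcup \Gamma$, and this is where the containment hypothesis enters (it is not used in (i)): each $A \in \Gamma'$ lies in some $S \in \Gamma$, and $S \subseteq \bigcup \Gamma$, so $A \subseteq \bigcup \Gamma$ for every $A \in \Gamma'$, whence $\bigcup \Gamma' \subseteq \bigcup \Gamma$. Together with (i), this is precisely the definition of $\Gamma' \vartriangleleft \Gamma$. Part (iii) is then a direct application of Theorem~\ref{th1}(i) to $\Gamma' \vartriangleleft \Gamma$, which is legitimate since $\Gamma' \subseteq \mathrm{MaxCritIndep}(G) \subseteq \mathrm{Ind}(G)$ and $\emptyset \neq \Gamma \subseteq \Omega(G)$.

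Finally, for (iv) I would add the hypothesis $\bigcup \Gamma' = \bigcup \Gamma$ and feed it into the inequality from (iii): since $|C'| + |\bigcup \Gamma'| \leq |C| + |\bigcup \Gamma|$ and the two unions coincide, their common cardinality cancels to give $|C'| \leq |C|$. Combined with the reverse inclusion $C \subseteq C'$ from (i) and the finiteness of the sets, this forces $|C| = |C'|$ and hence $C = C'$, that is $\bigcap \Gamma' = \bigcap \Gamma$.
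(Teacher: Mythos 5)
Your parts (ii)--(iv) are correct and coincide with the paper's argument, but your proof of (i) has a genuine gap, and it occurs exactly where you claim the containment hypothesis ``is not used in (i).'' Theorem \ref{th4}(ii) asserts that the union of two critical \emph{sets} is a critical set; it does not assert that the union of two critical \emph{independent} sets is independent. That stronger claim is false: in $C_{4}$ with vertices $1,2,3,4$, both $\{1,3\}$ and $\{2,4\}$ are maximum critical independent sets (each has difference $0=d(C_{4})$), yet their union is all of $V(C_{4})$, which is a critical set but is not independent and has cardinality $4>2=\alpha^{\prime}(C_{4})$. Consequently your key inequality $|A\cup C|\leq\alpha^{\prime}(G)$ is unjustified: $\alpha^{\prime}(G)$ bounds the size of critical \emph{independent} sets only, and mere criticality of $A\cup C$ puts no such bound on its size. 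The same example refutes the strengthened statement your argument would prove (part (i) without the containment hypothesis): take $\Gamma=\{\{1,3\}\}$, so $\bigcap\Gamma=\{1,3\}$ is critical, and $\Gamma^{\prime}=\{\{2,4\}\}\subseteq\mathrm{MaxCritIndep}(C_{4})$; then $\bigcap\Gamma\not\subseteq\bigcap\Gamma^{\prime}$.

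The repair is precisely the step you discarded, and it is how the paper proceeds: given $A\in\Gamma^{\prime}$, choose $S\in\Gamma$ with $A\subseteq S$; since also $\bigcap\Gamma\subseteq S$, the set $A\cup\bigcap\Gamma$ is contained in the independent set $S$ and is therefore independent. Only now does Theorem \ref{th4}(ii) (which supplies criticality) combine with this independence to show that $A\cup\bigcap\Gamma$ is a critical \emph{independent} set, so that the maximality of $A$ forces $A\cup\bigcap\Gamma=A$, i.e., $\bigcap\Gamma\subseteq A$. In short, the containment hypothesis is essential already in (i), not merely in (ii), and without it the chain of inequalities in your argument collapses at its first link.
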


\begin{proof}
\emph{(i)} Let $A\in\Gamma^{\prime}$ and $S\in\Gamma$, such that $A\subseteq
S$. Since $%
{\displaystyle\bigcap}
\Gamma\subseteq S$, it follows that $A\cup%
{\displaystyle\bigcap}
\Gamma\subseteq S$, and hence, $A\cup%
{\displaystyle\bigcap}
\Gamma$ is independent. By Theorem \ref{th4}, we get that $A\cup%
{\displaystyle\bigcap}
\Gamma$ is a critical independent set. Since $A\subseteq A\cup%
{\displaystyle\bigcap}
\Gamma$ and $A$ is a maximum critical independent set, we infer that $%
{\displaystyle\bigcap}
\Gamma\subseteq A$. Thus, $%
{\displaystyle\bigcap}
\Gamma\subseteq A$ for every $A\in\Gamma^{\prime}$. Therefore, $%
{\displaystyle\bigcap}
\Gamma\subseteq%
{\displaystyle\bigcap}
\Gamma^{\prime}$.

\emph{(ii)} By Part \emph{(i)}, we know that $%
{\displaystyle\bigcap}
\Gamma\subseteq%
{\displaystyle\bigcap}
\Gamma^{\prime}$. According the hypothesis, every element of $\Gamma^{\prime}$
is included in some element of $\Gamma$. Hence, we deduce that $%
{\displaystyle\bigcup}
\Gamma^{\prime}\subseteq%
{\displaystyle\bigcup}
\Gamma.$

\emph{(iii)} The inequality follows from Part \emph{(ii)} and Theorem
\ref{th1}\emph{(i)}.

\emph{(iv)} Part \emph{(iii)} implies $\left\vert
{\displaystyle\bigcap}
\Gamma^{\prime}\right\vert \leq\left\vert
{\displaystyle\bigcap}
\Gamma\right\vert $, and using Part \emph{(i)}, we obtain $%
{\displaystyle\bigcap}
\Gamma=%
{\displaystyle\bigcap}
\Gamma^{\prime}$.
\end{proof}

\begin{theorem}
\label{th14}Let $\Gamma\subseteq\Omega(G)$ and $\emptyset\neq\Gamma^{\prime
}\subseteq\mathrm{MaxCritIndep}(G)$ be such that for every $A\in\Gamma
^{\prime}$ there exists $S\in\Gamma$ enjoying $A\subseteq S$. If $%
{\displaystyle\bigcup}
\Gamma^{\prime}=%
{\displaystyle\bigcup}
\Gamma$, then $G$ is a K\"{o}nig-Egerv\'{a}ry graph.
\end{theorem}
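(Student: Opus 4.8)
The plan is to produce a K\"{o}nig-Egerv\'{a}ry collection of maximum critical independent sets and then invoke Theorem \ref{th12}; concretely, I aim to show that $\Gamma^{\prime}$ itself is such a collection, i.e. $\left\vert \bigcap\Gamma^{\prime}\right\vert +\left\vert \bigcup\Gamma^{\prime}\right\vert =2\alpha(G)$. The route I would take is to bring the extra hypothesis of Theorem \ref{th15} into force---namely that $\bigcap\Gamma$ is critical---and then read off the needed equality from its part \emph{(iv)}.

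First I would show that $\bigcup\Gamma^{\prime}$ is a critical independent set. Every member of $\Gamma^{\prime}$ is critical, and since $G$ is finite the family $\Gamma^{\prime}$ is finite; applying Theorem \ref{th4}\emph{(ii)} repeatedly (pairwise unions of critical sets are again critical, hence independent) shows that $\bigcup\Gamma^{\prime}$ is critical. By hypothesis $\bigcup\Gamma^{\prime}=\bigcup\Gamma$, so $\bigcup\Gamma$ is critical as well. As $\Gamma\neq\emptyset$ (each $A\in\Gamma^{\prime}$ lies inside some $S\in\Gamma$), Theorem \ref{Prop1} then yields that $\bigcap\Gamma$ is critical. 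This is precisely the hypothesis that Theorem \ref{th14} omits relative to Theorem \ref{th15}, so establishing it is the heart of the argument.

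With $\bigcap\Gamma$ critical, Theorem \ref{th15} applies verbatim. Its part \emph{(iv)}, together with the standing assumption $\bigcup\Gamma^{\prime}=\bigcup\Gamma$, gives $\bigcap\Gamma^{\prime}=\bigcap\Gamma$, whence $\left\vert \bigcap\Gamma^{\prime}\right\vert +\left\vert \bigcup\Gamma^{\prime}\right\vert =\left\vert \bigcap\Gamma\right\vert +\left\vert \bigcup\Gamma\right\vert $. I would then sandwich this common value between the two bounds already available: $\left\vert \bigcap\Gamma\right\vert +\left\vert \bigcup\Gamma\right\vert \geq 2\alpha(G)$ from Theorem \ref{th1}\emph{(ii)}, and $\left\vert \bigcap\Gamma^{\prime}\right\vert +\left\vert \bigcup\Gamma^{\prime}\right\vert \leq 2\alpha^{\prime}(G)\leq 2\alpha(G)$ from Theorem \ref{th13}. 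This forces equality throughout, so in particular $\alpha^{\prime}(G)=\alpha(G)$; some maximum independent set is therefore critical, and Theorem \ref{th5} gives that $G$ is K\"{o}nig-Egerv\'{a}ry (equivalently, $\Gamma^{\prime}$ is now a K\"{o}nig-Egerv\'{a}ry collection and Theorem \ref{th12} applies). The one genuine obstacle is this initial step---seeing that criticality of the members of $\Gamma^{\prime}$ propagates to $\bigcup\Gamma^{\prime}$, and hence, via the hypothesis, to $\bigcup\Gamma$ and then to $\bigcap\Gamma$; once $\bigcap\Gamma$ is known to be critical, everything reduces to the already-proven Theorem \ref{th15} and a short two-sided count.
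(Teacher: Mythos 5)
Your proof is correct and takes essentially the same route as the paper's: Theorem \ref{th4}\emph{(ii)} plus Theorem \ref{Prop1} to establish that $\bigcap\Gamma$ is critical, Theorem \ref{th15}\emph{(iv)} to conclude $\bigcap\Gamma^{\prime}=\bigcap\Gamma$, then a two-sided count forcing $\left\vert \bigcap\Gamma^{\prime}\right\vert +\left\vert \bigcup\Gamma^{\prime}\right\vert =2\alpha(G)$ and an appeal to Theorem \ref{th12}. The only cosmetic difference is that you cite Theorem \ref{th13} for the upper bound, whereas the paper re-derives that same chain inline via $\mathrm{core}(G[X])$ and $\mathrm{corona}(G[X])$; your version is if anything tidier, since the paper's $X$ is never actually introduced in that proof.
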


\begin{proof}
Since, by Theorem \ref{th4}\emph{(ii)}, $%
{\displaystyle\bigcup}
\Gamma^{\prime}$ is critical, we get that $%
{\displaystyle\bigcup}
\Gamma$ is critical. Hence, according to Theorem \ref{Prop1}, we infer that $%
{\displaystyle\bigcap}
\Gamma$ is critical. Applying Theorem \ref{th15}, we obtain $%
{\displaystyle\bigcap}
\Gamma=%
{\displaystyle\bigcap}
\Gamma^{\prime}$. Further, we have%
\begin{gather*}
2\alpha(G)\leq\left\vert
{\displaystyle\bigcap}
\Gamma\right\vert +\left\vert
{\displaystyle\bigcup}
\Gamma\right\vert =\left\vert
{\displaystyle\bigcap}
\Gamma^{\prime}\right\vert +\left\vert
{\displaystyle\bigcup}
\Gamma^{\prime}\right\vert \\
\leq\left\vert \mathrm{core}(G[X])\right\vert +\left\vert \mathrm{corona}%
(G[X])\right\vert =2\alpha(G[X])\leq2\alpha(G)\text{.}%
\end{gather*}
Consequently, $\left\vert
{\displaystyle\bigcap}
\Gamma^{\prime}\right\vert +\left\vert
{\displaystyle\bigcup}
\Gamma^{\prime}\right\vert =2\alpha(G)$, which ensures, by Theorem \ref{th12},
that $G$ is a K\"{o}nig-Egerv\'{a}ry graph.
\end{proof}

If $\Gamma^{\prime}=\mathrm{MaxCritIndep}(G)$ and $\Gamma=\Omega(G)$, Theorem
\ref{th14} immediately implies the following.

\begin{corollary}
\cite{Short2015} If $\mathrm{diadem}(G)=\mathrm{corona}(G)$, then $G$ is a
K\"{o}nig-Egerv\'{a}ry graph, which validates Conjecture \ref{conj2}.
\end{corollary}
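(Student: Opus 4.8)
The plan is to obtain this corollary as the special case of Theorem \ref{th14} in which $\Gamma^{\prime}=\mathrm{MaxCritIndep}(G)$ and $\Gamma=\Omega(G)$. Under these choices the two unions appearing in Theorem \ref{th14} become $\bigcup\Gamma^{\prime}=\mathrm{diadem}(G)$ and $\bigcup\Gamma=\mathrm{corona}(G)$ by the very definitions of diadem and corona. Consequently, the hypothesis $\mathrm{diadem}(G)=\mathrm{corona}(G)$ is literally the equality $\bigcup\Gamma^{\prime}=\bigcup\Gamma$ demanded by that theorem, and the conclusion of Theorem \ref{th14} is exactly the assertion that $G$ is a K\"{o}nig-Egerv\'{a}ry graph. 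So the task reduces to confirming that the remaining structural hypotheses of Theorem \ref{th14} hold for this instantiation.

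There are two such hypotheses to check. First, one needs $\emptyset\neq\Gamma^{\prime}=\mathrm{MaxCritIndep}(G)$; this holds for every graph, since the empty set is independent with $d(\emptyset)=0$, so $id(G)\geq 0$ and a critical independent set of maximum cardinality always exists. Second, one needs that for every $A\in\Gamma^{\prime}$ there is some $S\in\Gamma$ with $A\subseteq S$; taking $\Gamma=\Omega(G)$, this is precisely Corollary \ref{th3}, because any $A\in\mathrm{MaxCritIndep}(G)$ is in particular a critical independent set and can therefore be enlarged to a maximum independent set. With both conditions verified, Theorem \ref{th14} applies with no further work and delivers the claim.

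I do not anticipate a genuine obstacle, as the statement is a direct specialization of an already-established theorem; the only points meriting a moment's care are recognizing that the containment condition is supplied automatically by Corollary \ref{th3}, and that the non-emptiness of $\mathrm{MaxCritIndep}(G)$ is unconditional. I would close by remarking on the phrase ``validates Conjecture \ref{conj2}'': the conjecture is phrased with the cardinality equality $\left\vert \mathrm{diadem}(G)\right\vert =\left\vert \mathrm{corona}(G)\right\vert$, whereas the corollary assumes set equality $\mathrm{diadem}(G)=\mathrm{corona}(G)$. These are equivalent here because the universal inclusion $\mathrm{diadem}(G)\subseteq\mathrm{corona}(G)$ (noted earlier via Corollary \ref{th3}) forces equal finite cardinalities to coincide with set equality, so establishing the set-theoretic form settles the conjecture as stated.
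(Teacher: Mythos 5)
Your proof is correct and follows exactly the paper's route: the paper likewise obtains this corollary by specializing Theorem \ref{th14} to $\Gamma^{\prime}=\mathrm{MaxCritIndep}(G)$ and $\Gamma=\Omega(G)$, stating that the theorem then ``immediately implies'' the result. Your additional checks --- that $\mathrm{MaxCritIndep}(G)\neq\emptyset$, that the containment hypothesis is supplied by Corollary \ref{th3}, and that the set equality $\mathrm{diadem}(G)=\mathrm{corona}(G)$ is equivalent to the cardinality equality in Conjecture \ref{conj2} via the universal inclusion $\mathrm{diadem}(G)\subseteq\mathrm{corona}(G)$ --- are all sound and merely make explicit what the paper leaves implicit.
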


\section{Conclusions}

In this paper we focus on interconnections between unions and intersections of
maximum critical independents sets of a graph.

In \cite{Short2015}\ the question about possible polynomial complexity of the
lower bounds
\[
\left\vert \mathrm{nucleus}(G)\right\vert +\left\vert \mathrm{diadem}%
(G)\right\vert \leq2\alpha\left(  G\right)
\]
for every graph $G$ arises.

Let $S\in\mathrm{MaxCritIndep}(G)$.

\begin{proposition}
\label{Prop3}$2\alpha^{\prime}(G)=d\left(  \mathrm{\ker}(G)\right)
+\left\vert N[S]\right\vert $.
\end{proposition}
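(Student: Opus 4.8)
The plan is to express both $|N[S]|$ and $d(\mathrm{\ker}(G))$ in terms of a single quantity, namely the critical difference $d(G)=id(G)$, together with the critical independence number $\alpha^{\prime}(G)$, and then cancel. Everything rests on the two defining features of $S$: it is independent and it is a \emph{maximum} critical independent set.

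First I would unpack $|N[S]|$. Because $S\in\mathrm{MaxCritIndep}(G)$, it is a maximum critical independent set, so $|S|=\alpha^{\prime}(G)$. Since $S$ is independent, no vertex of $S$ has a neighbor inside $S$, hence $N(S)\cap S=\emptyset$ and therefore $|N[S]|=|S|+|N(S)|=\alpha^{\prime}(G)+|N(S)|$. Moreover, $S$ being critical means $d(S)=id(G)=d(G)$ (using Zhang's equality $d(G)=id(G)$), that is $|S|-|N(S)|=d(G)$; combined with $|S|=\alpha^{\prime}(G)$ this gives $|N(S)|=\alpha^{\prime}(G)-d(G)$. Substituting back yields $|N[S]|=2\alpha^{\prime}(G)-d(G)$.

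Next I would handle $d(\mathrm{\ker}(G))$. The key observation is that $\mathrm{\ker}(G)$, being the intersection of \emph{all} critical independent sets over the finite vertex set $V(G)$, is itself a critical independent set: by Theorem \ref{th4}\emph{(ii)} the intersection of any two critical sets is critical, so a straightforward induction over the finitely many critical independent sets shows $\mathrm{\ker}(G)$ is critical. Consequently $d(\mathrm{\ker}(G))=id(G)=d(G)$. Adding the two computations then gives $d(\mathrm{\ker}(G))+|N[S]|=d(G)+\bigl(2\alpha^{\prime}(G)-d(G)\bigr)=2\alpha^{\prime}(G)$, the desired identity. The only mildly non-routine step is recognizing that $\mathrm{\ker}(G)$ is critical, but this is immediate from Theorem \ref{th4}\emph{(ii)} together with finiteness; the rest is a direct counting argument driven by the independence and criticality of $S$.
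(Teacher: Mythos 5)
Your proof is correct and takes essentially the same approach as the paper's: both arguments rest on $|N[S]|=|S|+|N(S)|$ (from the independence of $S$) and on the fact that $\mathrm{\ker}(G)$ and $S$, being critical, have the same difference, so the neighborhood terms cancel and the sum equals $2|S|=2\alpha^{\prime}(G)$. The only difference is presentational: you route through $d(G)$ explicitly and justify that $\mathrm{\ker}(G)$ is critical via Theorem \ref{th4}\emph{(ii)} plus finiteness, a fact the paper's proof asserts without comment.
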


\begin{proof}
Since $\mathrm{\ker}(G)$ and $S$ are critical sets of the graph $G$, we
obtain
\begin{gather*}
d\left(  \mathrm{\ker}(G)\right)  +\left\vert N[S]\right\vert =\left\vert
\mathrm{\ker}(G)\right\vert -\left\vert N\left(  \mathrm{\ker}(G)\right)
\right\vert +\left\vert S\right\vert +\left\vert N\left(  S\right)
\right\vert \\
=\left\vert S\right\vert -\left\vert N\left(  S\right)  \right\vert
+\left\vert S\right\vert +\left\vert N\left(  S\right)  \right\vert
=2\left\vert S\right\vert =2\alpha^{\prime}(G),
\end{gather*}
which completes the proof.
\end{proof}

By Theorem \ref{th5} and Proposition \ref{Prop3}, if $G$\ is a
K\"{o}nig-Egerv\'{a}ry graph, then we get
\[
2\alpha(G)=d\left(  \mathrm{\ker}(G)\right)  +\left\vert N[S]\right\vert ,
\]
because $\alpha^{\prime}(G)=\alpha(G)$. Consequently,
\[
2\alpha\left(  G\right)  \leq\left\vert \mathrm{\ker}(G)\right\vert
+\left\vert N\left[  S\right]  \right\vert ,
\]
which motivates the following.

\begin{problem}
Characterize graphs such that
\[
2\alpha\left(  G\right)  \leq\left\vert \mathrm{\ker}(G)\right\vert
+\left\vert N\left[  S\right]  \right\vert \leq\left\vert \mathrm{\ker
}(G)\right\vert +2\alpha^{\prime}(G)\leq3\alpha^{\prime}(G).
\]
Let us call them approximate K\"{o}nig-Egerv\'{a}ry graphs.
\end{problem}

By Proposition \ref{Prop3},
\[
\left\vert \mathrm{\ker}(G)\right\vert +\left\vert N\left[  S\right]
\right\vert -\left\vert N\left(  \mathrm{\ker}(G)\right)  \right\vert
=2\alpha^{\prime}(G)\leq2\alpha\left(  G\right)  .
\]
Since $\mathrm{\ker}(G)$ \cite{LevMan2012a}, $\left\vert N\left[  S\right]
\right\vert $ and $\alpha^{\prime}(G)$ \cite{Larson2007} can be computed
polinomially, it means that for approximate K\"{o}nig-Egerv\'{a}ry graphs
their independence numbers are bounded as follows%
\begin{gather*}
\frac{\left\vert \mathrm{\ker}(G)\right\vert +\left\vert N\left[  S\right]
\right\vert }{2}-\frac{\left\vert N\left(  \mathrm{\ker}(G)\right)
\right\vert }{2}=\alpha^{\prime}(G)\leq\alpha\left(  G\right) \\
\leq\frac{\left\vert \mathrm{\ker}(G)\right\vert +\left\vert N\left[
S\right]  \right\vert }{2}\leq\frac{\left\vert \mathrm{\ker}(G)\right\vert
}{2}+\alpha^{\prime}(G)\leq\frac{3}{2}\alpha^{\prime}(G)\text{.}%
\end{gather*}

\end{document}